\documentclass[journal]{IEEEtran}
\usepackage[T1]{fontenc}

\def\argmax{\mathop{\rm \arg\!\max}}
\def\argmin{\mathop{\rm \arg\!\min}}
\usepackage{graphicx}
\usepackage[noadjust]{cite}
\usepackage{mcite}
\usepackage{amsfonts,helvet}
\usepackage{fancyhdr}
\usepackage{threeparttable}
\usepackage{epsf,epsfig}
\usepackage{amsthm}
\usepackage{amsmath}
\usepackage{siunitx}
\usepackage{amssymb}
\usepackage{dsfont}
\usepackage{subfigure}
\usepackage{color}
\usepackage{algorithm}
\usepackage{algpseudocode}
\usepackage{algcompatible}
\usepackage{enumerate}
\usepackage{hyperref}
\usepackage{cancel}
\usepackage{bbm}
\usepackage{dsfont}
\usepackage[subnum]{cases}
\usepackage{adjustbox}
\usepackage{enumitem}
\usepackage{dblfloatfix} 


\newcommand{\fig}[1]{Fig.\ \ref{#1}}
\newtheorem{theorem}{Theorem}

\newtheorem{corollary}{Corollary}

\newtheorem{lemma}{Lemma}

\newtheorem{proposition}{Proposition}
\newtheorem{remark}{Remark}

\newmuskip\pFqmuskip




\usepackage{array}

\makeatletter
\newcommand{\thickhline}{%
    \noalign {\ifnum 0=`}\fi \hrule height 1pt
    \futurelet \reserved@a \@xhline
}
\newcolumntype{"}{@{\hskip\tabcolsep\vrule width 1pt\hskip\tabcolsep}}
\makeatother

\setcounter{page}{1}
\setcounter{proposition}{0}
\IEEEoverridecommandlockouts

\title{
Resolution-Adaptive Hybrid MIMO Architectures \\for Millimeter Wave Communications}
\author{
Jinseok Choi, Brian L. Evans, and Alan Gatherer  \thanks{
J. Choi and B. L. Evans are with the Wireless Networking and Communication Group(WNCG), Department of Electrical and Computer Engineering, The Universityof Texas at Austin, Austin, TX 78701 USA. (e-mail: jinseokchoi89@utexas.edu, bevans@ece.utexas.edu).

A. Gatherer is with Wireless Access Lab., Huawei Technologies, Legacy Dr, Plano, TX  75024 USA. (e-mail: alan.gatherer@huawei.com).

The authors at The University of Texas at Austin were supported by gift funding from Huawei Technologies.
}
}
\begin{document}
\maketitle

\begin{abstract}

In this paper, we propose a hybrid analog-digital beamforming architecture with resolution-adaptive ADCs for millimeter wave (mmWave) receivers with large antenna arrays. 
{\color {black} We adopt array response vectors for the analog combiners and derive ADC bit-allocation (BA) solutions in closed form. 
The BA solutions reveal that the optimal number of ADC bits is logarithmically proportional to the RF chain's signal-to-noise ratio raised to the $1/3$ power. 
Using the solutions, two proposed BA algorithms minimize the mean square quantization error of received analog signals under a total ADC power constraint. 
Contributions of this paper include 
1) ADC bit-allocation algorithms to improve communication performance of a hybrid MIMO receiver, 2) approximation of the capacity with the BA algorithm as a function of channels, and 3) a worst-case analysis of the ergodic rate of the proposed MIMO receiver that quantifies system tradeoffs and serves as the lower bound. }
Simulation results demonstrate that the BA algorithms outperform a fixed-ADC approach in both spectral and energy efficiency, and validate the capacity and ergodic rate formula.
For a power constraint equivalent to that of fixed 4-bit ADCs, the revised BA algorithm makes the quantization error negligible while achieving $22$\% better energy efficiency.
Having negligible quantization error allows existing state-of-the-art digital beamformers to be readily applied to the proposed system.
\end{abstract}
\begin{IEEEkeywords}
Millimeter wave, hybrid MIMO architecture, low-resolution ADC, bit allocation, achievable rate.
\end{IEEEkeywords}

\section{Introduction}
\label{sec:intro}

Moving to a millimeter wave spectrum in range of 30\textendash 300 GHz enables the utilization of multi-gigahertz bandwidth and offers an order of magnitude increase in achievable rate \cite{pi2012millimeter,swindlehurst2014millimeter,bai2015coverage}.
The small wavelength allows a large number of antennas to be packed into tranceivers with very small antenna spacing.
Leveraging the large antenna arrays, mmWave systems can manipulate directional beamforming to produce high beamforming gain, which helps overcome large free-space pathloss of mmWave signals.
Problems with hardware cost and power consumption, however, arise from deploying large antenna arrays coupled with power-demanding ADCs. 
To overcome these challenges, hybrid analog-digital beamforming architectures \cite{han2015large} that attempt to reduce the burden of fully digital beamforming, and receivers with low-resolution ADCs \cite{mo2015capacity} have attracted the most interest in recent years.
To take advantage of the two considered architectures, we propose a hybrid massive-MIMO architecture with resolution-adaptive ADCs for mmWave communications. 

\subsection{Prior Work}
\label{subsec:prior}
Hybrid architectures employ a less number of RF chains than the number of antennas to reduce power consumption and system complexity.
An analog beamformer is the pivotal component that enables the hybrid structure to reduce the number of RF chains  \cite{el2014spatially,alkhateeb2014channel}.
An analog beamformer is often designed by selecting array response vectors corresponding to the dominant channel eigenmodes \cite{el2012capacity,el2012low, alkhateeb2013hybrid, el2014spatially, liang2014low, alkhateeb2015limited}. 
Indeed, it was shown that the optimal RF precoder and combiner converge to array response vectors in dominant eigenmodes \cite{el2012capacity}.
Motivated by this, orthogonal matching pursuit (OMP) was used to develop beamformer design algorithms \cite{el2012low,alkhateeb2013hybrid,el2014spatially}.
Although the hybrid beamforming approaches in \cite{el2014spatially,alkhateeb2014channel,el2012capacity,el2012low, alkhateeb2013hybrid, liang2014low, alkhateeb2015limited, zhang2005variable, venkateswaran2010analog} delivered remarkable achievements in the development of the low-power and low-complexity architecture with large antenna arrays, the hybrid architectures still assume high-resolution ADCs that consume a high power at receivers.

Since power consumption of ADCs scales exponentially in terms of the number of quantization bits \cite{lee2008analog}, employing low-resolution ADCs can be indispensable to reduce hardware cost and power consumption in the large antenna array regime.
Consequently, low-resolution ADC architectures have been investigated 
\cite{mezghani2007ultra,risi2014massive,jacobsson2015one,mo2014channel,wang2014multiuser,wen2016bayes,mo2016channel,mezghani2012capacity,li2016channel,orhan2015low,fan2015uplink,zhang2016spectral}.
It was revealed that least-squares channel estimation and maximum-ratio combining (MRC) with 1-bit ADCs are sufficient to support multi-user operation with quadrature-phase-shift-keying \cite{risi2014massive}, which is known to be optimal for 1-bit ADC systems \cite{mezghani2007ultra, mo2015capacity}.
Deploying large antenna arrays provided an opportunity to use message-passing and expectation-maximization algorithms for symbol detection and channel estimation with low complexity \cite{mo2014channel,wang2014multiuser,wen2016bayes, mo2016channel}.
To examine the effect of quantization in achievable rate, the Bussgang decomposition \cite{mezghani2012capacity,li2016channel} was utilized for linear expressions of quantization operation.
The analysis in \cite{mezghani2012capacity} revealed that noise correlation can reduce the capacity loss to less than $\frac{2}{\pi}$ at low signal-to-noise ratio (SNR).
A lower bound for the achievable rate of the 1-bit ADC massive MIMO system was derived \cite{li2016channel}, using MRC detection with a linear minimum mean square error (MMSE) channel estimator.
Offering an analytical tractability, the additive quantization noise model (AQNM) \cite{orhan2015low,fan2015uplink,zhang2016spectral,zhang2017performance} were adopted to derive the achievable rate of massive MIMO systems with low-resolution ADCs using MRC in Rayleigh \cite{fan2015uplink} and Rician fading channels \cite{zhang2016spectral}.




The considered architectures in the previous studies, however, present two extreme points: (1) less number of RF chains with high-resolution ADCs and (2) low-resolution ADCs with full number of RF chains.
One prior study with less extremity \cite{mo2016achievable} focused on a generalized system consisting of less number of RF chains with low-resolution ADCs.
In \cite{mo2016achievable}, the spectral efficiency was analyzed under a constant channel assumption.
It is also assumed that each ADC's resolution is predetermined regardless of channel gain on each RF chain.
In another line of research, mixed-ADC architectures were proposed \cite{zhang2017performance, liang2016mixed,zhang2016mixed}.
{\color{black} In \cite{zhang2017performance}, performance analysis of mixed-ADC systems where receivers use a combination of low-resolution and high-resolution ADCs showed that the architecture can achieve a better energy-rate tradeoff compared to systems either with infinite-resolution ADCs or low-resolution ADCs.}
In \cite{liang2016mixed,zhang2016mixed} each antenna uses different ADC resolution depending on its channel gain. 
This system has explicit benefits compared to fixed low-resolution ADC systems such as increase of channel estimation accuracy and spectral efficiency.
In \cite{liang2016mixed,zhang2016mixed}, however, they force antennas to select between $1$-bit ADC and $\infty$-bit ADC, which is far from an energy-efficient architecture mainly because the total ADC power consumption can be dominated by only a few high-resolution ADCs.
Moreover, it assumes full number of RF chains, which leads to dissipation of energy.
For these reasons, an adaptive ADC design for a hybrid beamforming architecture is still questionable. 

\subsection{Contributions}
\label{subsec:contributtion}

{\color{black} Our main contribution is the proposition of a hybrid beamforming MIMO architecture with resolution-adaptive ADCs to offer a potential energy-efficient mmWave receiver architecture. 
Under this architecture proposition, we investigate the architecture as follows: we first $(i)$ develop two bit-allocation (BA) algorithms to exploit the flexible ADC architecture and derive a capacity expression for a given channel realization.
Due to the intractable ergodic rate analysis with BA, we then $(ii)$ perform the analysis without BA, offering the baseline performance of the proposed receiver architecture.}
The proposed architecture is distinguishable from many other systems because it not only consists of a lower number RF chains and low-resolution ADCs \cite{mo2016achievable} but also adapts the ADCs resolutions \cite{liang2016mixed,zhang2016mixed}.
In the context of mmWave communications, we design the analog combiner to be a set of array response vectors to aggregate channel gains in the angular domain. 
Such design approach is beneficial as the sparse nature of mmWave channels in the angular domain allows the number of RF chains to be less than the number of antennas. 
Leaving the design issue of digital combiners, we focus on the quantization problem for the proposed system.

Given the different channel gains on RF chains, the system performance can be improved by leveraging the flexible ADC architecture. 
 {\color{black} To this end, as an extension of our work \cite{choi2016adc}, we derive a close-form BA solution for a minimum mean square quantization error (MMSQE) problem subject to a constraint on the total ADC power.
Using the solution, we develop a BA algorithm that determines ADC resolutions depending on angular domain channel gains.
The derived solution provides an explicit relationship between the number of quantization bits and channel environment.
One major finding from the solution is that the optimal number of ADC bits is logarithmically proportional to the corresponding RF chain's SNR raised to the $1/3$ power. 
This result quantifies the conclusion made in \cite{liang2016mixed} that allocating more bits to the RF chain with stronger channel gain is beneficial.
} 
We also derive a solution for a revised MMSQE problem to modify the proposed BA method to be robust to noise.
{\color{black}  We show that the revised MMSQE problem is equivalent to maximizing generalized mutual information (GMI) in the low SNR regime.
Applying the solution to a capacity, we approximate the capacity with the revised MMSQE-BA algorithm as a function of channels.}
Simulation results disclose that the BA algorithms achieve a higher capacity and sum rate than the conventional fixed-ADC system where all ADCs have same resolution.
In particular, the revised BA algorithm provides the sum rate close to the infinite-resolution ADC system while achieving higher energy efficiency than using fixed ADCs in the low-resolution regime.

Regarding the implementation issue of the BA algorithms, the best scenario is to operate the resolution switching at the time-scale of the channel coherence time. 
This is because the proposed BA algorithms allocate different quantization bits to each ADC depending on the channel gain on each RF chain. 
Accordingly, if the switch is able to operate at the channel coherence time, the proposed architecture is able to adapt to channel fluctuations.
Such coherence time switching in mmWave channels, however, may not be feasible due to the very short coherence time of mmWave channels \cite{khan2012millimeter}. 
Consequently, the switching period may need to be the multiples of the coherence time. 
In this case, switching at the time-scale of slowly changing channel characteristics such as large-scale fading and angle of arrival (AoA) marginally degrades the performance of the BA algorithms.
Then, the worst-case scenario is not to exploit the flexibility of ADC resolutions, which is equivalent to have an infinitely long switching period, and indeed converges to fixed-ADC architectures.


{\color{black} To provide deeper insight for the proposed system, we further perform an ergodic rate analysis.
As mentioned, due to the intractability of the anaylsis with the BA algorithms, we derive an approximation of the ergodic rate for the considered architecture without applying BA---the worst-case analysis---for analytical tractability.}
Although the analysis focuses on the worst-case scenario, we shed light on the importance of the derived rate for the following reasons:
\begin{itemize}[leftmargin=*]
\item The obtained achievable rate can serve as the lower bound of the proposed architecture.
Hence, we conjecture that the proposed system can achieve a higher ergodic rate than the derived rate by leveraging the flexible ADCs.
\item As a function of system parameters, the tractable rate provides a broad insight for the considered system.
We observe that the achievable rates for the BA algorithms and for the fixed ADC show similar trends. 
In this regard, the derived rate provides general tradeoffs of the proposed architecture in terms of system parameters including quantization effect.
\item The analysis in \cite{mo2016achievable} considered a quasi-static channel.
This setting, however, ignores the transmission of a coded packet over different fading realizations so that rate adaptation cannot be applied over multiple fading realization.
Especially, the quasi-static setting is not adequate in mmWave channels with the short coherence time \cite{khan2012millimeter}. 
Arguably, our ergodic rate analysis offers more realistic evaluation in contemporary wireless systems that transmit a coded packet over multiple fading realizations \cite{lozano2012yesterday}.
\end{itemize}


{\it Notation}: $\bf{A}$ is a matrix and $\bf{a}$ is a column vector. 
$\mathbf{A}^{H}$ and $\mathbf{A}^\intercal$  denote conjugate transpose and transpose. 
$[{\bf A}]_{i,:}$ and $ \mathbf{a}_i$ indicate the $i$th row and column vector of $\bf A$. 
We denote $a_{i,j}$ as the $\{i,j\}$th element of $\bf A$ and $a_{i}$ as the $i$th element of $\bf a$. 
$\mathcal{CN}(\mu, \sigma^2)$ is a complex Gaussian distribution with mean $\mu$ and variance $\sigma^2$. 
$\mathbb{E}[\cdot]$ and ${\rm Var}[\cdot]$ represent expectation and variance operator, respectively.
The cross-correlation matrix is denoted as ${\bf R}_{\bf xy} = \mathbb{E}[{\bf x}{\bf y}^H]$.
The diagonal matrix $\rm diag(\bf A)$ has $\{a_{i,i}\}$ at its diagonal entries, and $\rm diag (\bf a)$ or $\rm diag(\bf a^\intercal)$ has $\{a_i\}$ at its diagonal entries. 
${\bf I}_N$ is a $N\times N$ identity matrix and $\|\bf A\|$ represents L2 norm. 
Inequality between vectors, e.g. $\bf a < b$, is element-wise inequality, and ${\rm tr}(\cdot)$ is a trace operator. 
We denote $(a)_{b}^{+} = \max(a,b)$ and $(a)_{b}^{-} = \min(a,b)$.

\begin{figure}[!t]\centering
\includegraphics[scale = 0.43]{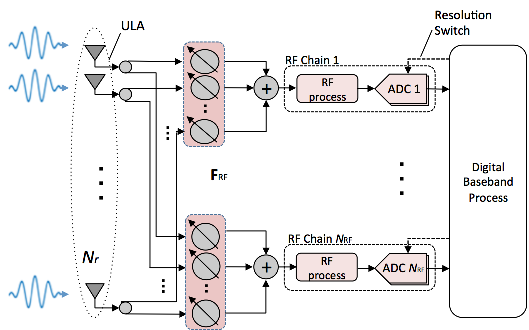}
\caption{A hybrid beamforming architecture with resolution-adaptive ADCs.} 
\label{fig:system}
\end{figure}

\section{System Model}
\label{sec:sys_model}

\subsection{Network and Signal Models}

We consider single-cell MIMO uplink networks in which $N_{u}$ users with a single transmit antenna are served by a base station (BS) with $N_r$ antennas. 
We assume large antenna arrays at the BS ($N_r \gg N_u$).
The hybrid architecture with low-resolution ADCs is employed at the BS.
We focus on uniform linear array (ULA) and assume that there are $N_{\rm RF}$ RF chains connected to $N_{\rm RF}$ pairs of ADCs.
{\color{black} Employing adaptive ADCs such as flash ADCs, we consider the proposed system to be able to switch quantization resolution.
Indeed, many power and resolution adaptive flash ADCs have been fabricated \cite{yoo2002power,nahata2004high,rajashekar2008design}, and flash ADCs are the most suitable ADCs for applications requiring very large bandwidth with moderate resolution \cite{le2005analog}.}



Assuming a narrowband channel, the received baseband analog signal ${\bf r} \in \mathbb{C}^{N_r}$ at the BS is expressed as
\begin{align}
\label{eq:rx_signal}
{\bf r} = \sqrt{p_u}{\bf Hs} + \tilde {\bf n}
\end{align} 
where $p_u$ is the average transmit power of users, $\bf H$ represents the $N_r \times N_u$ channel matrix between the BS and users, $\bf s$ indicates the $N_u \times 1$ vector of symbols transmitted by $N_u$ users and
$\tilde{\bf n} \in \mathbb{C}^{N_r}$ is the additive white Gaussian noise which follows complex Gaussian distribution $\tilde{\bf n}  \sim \mathcal{CN}(\mathbf{0}, \mathbf{I}_{N_r})$. 
We further consider that the transmitted signal vector ${\bf s} \sim \mathcal{CN}(\mathbf{0}, \mathbf{I}_{N_r})$ is Gaussian distributed with a zero mean and unit variance.
We also assume that the channel $\bf H$ is perfectly known at the BS.

An analog beamformer ${\bf W_{\rm RF}}  \in \mathbb{C}^{N_r \times N_{\rm RF} }$ is applied to ${\bf r}$ and constrained to satisfy $[{\bf W_{\rm RF} }{\bf W}^H_{\rm RF}]_{i,i} = 1/N_r$, i.e., all element of $\bf W_{\rm RF}$ have equal norm of $1/\sqrt{N_r}$.
\begin{align}
\label{eq:rx_rf_signal} 
\mathbf{ y}
={\bf W}^H_{\rm RF}\,\mathbf{r} 
= \sqrt{p_u}\mathbf{W}^H_{\rm RF}\,\mathbf{Hs} + \mathbf{W}^H_{\rm RF}\,\tilde{\bf n}.
\end{align} 
We consider that the number of RF chains is less than the number of antennas ($N_{\rm RF} < N_r$), alleviating the power consumption and complexity at the BS. 
Each beamforming output $y_i$ is connected to an ADC pair as shown in \fig{fig:system}.
At each ADC, either a real or imaginary component of the complex signal $y_i$ is quantized.


\subsection{Channel Model}
\label{subsec:channel}

In this paper, we consider mmWave channels. 
Since mmWave channels are expected to have limited scattering~\cite{sayeed2007maximizing,el2014spatially,heath2016overview}, each user channel is the sum of contributions of $L$ scatterings and $L \ll N_r$.
Adopting a geometric channel model, the $k$th user channel with $L_k$ scatterers that contribute to $L_k$ propagation paths is expressed as
\begin{align}
\label{eq:channel_geo}
{\bf h}_k = \sqrt{\gamma_k}\sum_{\ell = 1}^{L_k}g^k_\ell {\bf a}(\theta^k_\ell) \in \mathbb{C}^{N_r}
\end{align}
where $\gamma_k$ denotes the large-scale fading gain that includes geometric attenuation, shadow fading and noise power between the BS and $k$th user, $g^k_{\ell}$ is the complex gain of the $\ell$th path for the $k$th user and ${\bf a}(\theta^k_{\ell})$ is the BS antenna array response vector corresponding to the azimuth AoA of the $\ell$th path for the $k$th user $\theta^k_{\ell} \in [-\pi/2,\pi/2]$.
Each complex path gain $g^k_{\ell} \sim \mathcal{CN}(0, 1)$ is assumed to be an independent and identically distributed (IID) complex Gaussian random variable.
We also assume that the number of propagation paths $L_k$ is distributed as $L_k \sim \max \big\{Poisson(\lambda_{\rm p}), 1\big\}$ \cite{akdeniz2014millimeter} for $k = 1,\cdots,N_u$. 
We call $\lambda_{\rm p} \in \mathbb{R}$ as the near average number of propagation paths.

Under the ULA assumption, the array response vector is 
\begin{align}
\nonumber
{\bf a}(\theta) = \frac{1}{\sqrt{N_r}}\Big[1,e^{-j 2\pi\vartheta},e^{-j 4\pi \vartheta},\dots,e^{-j 2(N_r-1)\pi\vartheta}\Big]^\intercal 
\end{align}
where $\vartheta$ is the normalized spatial angle that is $\vartheta = \frac{d}{\lambda}\sin(\theta)$, $\lambda$ is a signal wave length, and $d$ is the distance between antenna elements.
Considering the uniformly-spaced spatial angle, i.e., $\vartheta_i = \frac{d}{\lambda}\sin(\theta_i) = (i-1)/N_r$, the matrix of the array response vectors
$\mathbf{A}=\big[{\bf a}(\theta_1),\cdots,{\bf a}(\theta_{N_r})\big]$
becomes a unitary discrete Fourier transform matrix; ${\bf A}^{H}{\bf A} = {\bf A}{\bf A}^{H} = \bf I$.
Then, adopting the virtual channel representation~\cite{sayeed2002deconstructing,mendez2016hybrid,heath2016overview}, the channel vector ${\bf h}_k$ in \eqref{eq:channel_geo} can be modeled as
\begin{align} 
\nonumber
{\bf h}_k &= {\bf A}{ \tilde{ \bf h}}_{{\rm b},k}=\sum_{i = 1}^{N_r}{ \tilde h}_{{\rm b},(i,k)}\,{\bf a}(\theta_i)
\end{align}
where $\tilde{\bf h}_{{\rm b},k} \in \mathbb{C}^{N_r}$ is the beamspace channel of the $k$th user, i.e., $\tilde{\bf h}_{{\rm b},k}$ has $L_k$ nonzero elements that contain the complex gains $\sim \mathcal{CN}(0,1)$ and the large-scale fading gain $\sqrt{\gamma_k}$.
We denote the beamspace channel matrix as $\tilde{\bf H}_{\rm b} = [\tilde{\bf h}_{{\rm b},1},\cdots,  \tilde{\bf h}_{{\rm b},N_u}]$ and it can be decomposed into $\tilde{\bf  H}_{\rm b} = \tilde{\bf  G}{\bf D}_\gamma^{1/2}$ where $\tilde{\bf  G} \in \mathbb{C}^{N_r \times N_u} $ is the sparse matrix of complex path gains and ${\bf D}_\gamma = {\rm diag}(\gamma_1,\cdots,\gamma_{N_u})$.
Accordingly, the beamspace channel of the $k$th user is expressed as $ \tilde{\bf h}_{{\rm b},k} = \sqrt{\gamma_k}\tilde {\bf g}_k$.
Finally, the channel matrix $\bf H$ is expressed as
\begin{align}
 \label{eq:channel}
{\bf H} =  {\bf A} \tilde {\bf H}_{\rm b} = {\bf A}\tilde {\bf G }{\bf D}_\gamma^{1/2}.
\end{align}
We assume that the analog beamformer is composed of the array response vectors corresponding to the $N_{\rm RF}$ largest channel eigenmodes \cite{el2012capacity}, i.e., ${\bf W_{\rm RF}} = {\bf A}_{\rm RF}$
where ${\bf  A}_{\rm RF}$ is a ${N_r \times N_{\rm RF}}$ sub-matrix of $\bf A$.
We further assume that the array response vectors in ${\bf A}_{\rm RF}$ capture all channel propagation paths from $N_u$ users \cite{kim2015virtual}.
Then, the received signal after the analog beamforming in \eqref{eq:rx_rf_signal} reduces to
\begin{align} 
\label{eq:y}
{\bf {y}}  = \sqrt{p_u}{\bf  A}_{\rm RF}^H{\bf Hs} +  {\bf A}_{\rm RF}^H \tilde{\bf n} =\sqrt{p_u} {\bf  H_{\rm b}}{\bf s} + {\bf n}
\end{align}
where $  {\bf n} = {\bf A}_{\rm RF}^H\tilde{\bf n} \sim \mathcal{CN}(\mathbf{0}, \mathbf{I}_{N_{\rm RF}})$ as $\bf A$ is unitary. 
Note that ${\bf H}_{\rm b}$
is the ${N_{\rm RF} \times N_u}$ sub-matrix of the beamspace channel matrix $\tilde {\bf H}_{\rm b}$ and contains $\sum_{k=1}^{N_u}L_k$ propagation path gains: 
\begin{align}
\label{eq:RFchannel}
{\bf H_{\rm b} = GD}_\gamma^{1/2}
\end{align}
where ${\bf G }$\
is the ${N_{\rm RF} \times N_u}$ sub-matrix of the complex gain matrix $\tilde{\bf G}$, corresponding to ${\bf A}_{\rm RF}$.


\subsection{Quantization Model}
\label{subsec:quantization}

We consider that each of the $i$th ADC pair has $b_i$ quantization bits and adopt the AQNM~\cite{fletcher2007robust, orhan2015low} as the quantization model to obtain a linearized quantization expression. 
The AQNM is accurate enough in low and medium SNR ranges \cite{orhan2015low}.
After quantizing $\mathbf{y}$, we have the quantized signal vector
\begin{align} 
\nonumber
\mathbf{y}_{\rm q}&=\mathcal{ Q}(\mathbf{ y}) = \mathbf{D}_{\alpha} \,\mathbf{ y}+ \mathbf{n}_{\rm q}
\\ \label{eq:AQNM2}
& = \sqrt{p_u}{\bf D_\alpha  H_{\rm b}}{\bf s + D_\alpha} {\bf n} + {\bf n}_{\rm q}
\end{align} 
where $\mathcal{Q}(\cdot)$ is an element-wise quantizer function separately applied to the real and imaginary parts and $\mathbf{D}_\alpha$ is a diagonal matrix with quantization gains $\mathbf{D}_\alpha =  {\rm diag}(\alpha_1,\cdots, \alpha_{N_{\rm RF}})$.
The quantization gain $\alpha_i$ is a function of the number of quantization bits and defined as $\alpha_i = 1- \beta_i$ where $\beta_i$ is a normalized quantization error. 
Assuming the non-linear scalar MMSE quantizer and Gaussian transmit symbols, it can be approximated for $b_i > 5$ as $\beta_i = \frac{\mathbb{E}[|{y}_i - {y}_{{\rm q}i}|^2]}{\mathbb{E}[|{y}_i|^2]}  \approx \frac{\pi\sqrt{3}}{2} 2^{-2b_i}$.
The values of $\beta_i$ are listed in Table \ref{tb:beta} for $b_i \leq 5$.
Note that $b_i$ is the number of quantization bits for each real and imaginary part of $ y_i$.
The quantization noise $\mathbf{n}_{\rm q}$ is an additive noise which is uncorrelated with $\bf  y$ and follows the complex Gaussian distribution with zero mean.
For a fixed channel realization $ {\bf H}_{\rm b}$, the covariance matrix of $\bf n_{\rm q}$ is
\begin{align}
\nonumber
\mathbf{R}_{\mathbf{n}_{\rm q}\mathbf{n}_{\rm q}}= {\bf D}_\alpha {\bf D}_\beta \,{\rm diag}(p_u{\bf H_{\rm b}}{\bf H}_{\rm b}^H + {\mathbf{I}_{N_{\rm RF}}})
\end{align}
where $\mathbf{D}_\beta =  {\rm diag}(\beta_1,\cdots, \beta_{N_{\rm RF}})$.

Assuming sampling at the Nyquist rate, the ADC power consumption is modeled as \cite{orhan2015low} 
\begin{align}
\label{eq:ADCpower}
P_{\rm ADC}(b)= c\,f_s\,2^{b}
\end{align}
where $c$ is the energy consumption per conversion step (conv-step), called Walden's figure-of-merit, $f_s$ is the sampling rate and $b$ is the number of quantization bits.
This model illustrates that the ADC power consumption scales exponentially in the number of quantization bits $b$.

\begin{table}[!t]
\centering
\caption{The Values of $\beta$ for Different Quantization Bits $b$ }\label{tb:beta}
\begin{tabular}{ l c c c c c }
  \thickhline
$ b$  & 1 & 2 & 3 & 4 & 5\\
  \hline
 $\beta$   & 0.3634 & 0.1175 & 0.03454 & 0.009497 & 0.002499 \\
  \thickhline
\end{tabular}
\end{table}

\section{ADC Bit Allocation Algorithms}
\label{sec:main2}

In this section, we propose BA algorithms to improve the performance of the proposed system by leveraging the flexibility of ADC resolutions.
{\color{black} Note that we assume perfect knowledge of the channel state information (CSI) at the BS. 
The rationale behind this is that efficient algorithms have been proposed for mmWave channel estimation \cite{mo2014channel, mo2016channel,alkhateeb2014channel,lee2014exploiting,gao2016channel} by exploiting the sparse nature of mmWave channels. 
In the hybrid receiver structure with $N_{\rm RF} < N_r$, state-of-the-art mmWave channel estimators 
such as bisectional approach \cite{alkhateeb2014channel}, modified OMP \cite{lee2014exploiting}, and distributed grid message passing \cite{gao2016channel} validated the estimation performance.
Assuming the use of high-resolution ADCs for a channel estimation phase, such estimation algorithms can be adopted in the considered system.}

\subsection{MMSQE Bit Allocation}
\label{subsec:BA_Power}

We adopt the MSQE $\mathcal{E}(b)  = \mathbb{E}\big[| y_- {y_{{\rm q}}}|^2\big]$ for  $\bf y$ in \eqref{eq:y} as a distortion measure.
Assuming the MMSE quantizer and Gaussian transmit symbols, the MSQE of $y_i$ with $b_i$ quantization bits for $b_i > 5 $ is modeled as \cite{orhan2015low}
\begin{align}
\label{eq:msqe}
\mathcal{E}_{y_i}(b_i) 
& = \frac{\pi\sqrt{3}}{2}\sigma_{y_i}^2\,2^{-2b_i}
\end{align}
where $\sigma^2_{y_i} = p_u\|[{\bf H}_{\rm b}]_{i,:}\|^2+1$.
Using \eqref{eq:msqe} for any quantization bits,\footnote{{\color{black} 
Although \eqref{eq:msqe} holds for $b_i > 5$, it can be validated by the performance of our algorithms that \eqref{eq:msqe} can provide a good approximation when formulating optimization problem even for a small number of quantization bits.}}
we formulate the MMSQE problem through some relaxations.
Then, the solution of the MMSQE problem minimizes the total quantization error by adapting quantization bits under constrained total ADC power consumption.

To avoid integer programming, we relax the integer variables ${\bf b} \in \mathbb{Z}^{N_{\rm RF}}_+ $ to the real numbers ${\bf b} \in \mathbb{R}^{N_{\rm RF}} $ to find a closed-form solution.
We also consider \eqref{eq:msqe} to hold for $b_i \in \mathbb{R}$.
Despite the fact that the ADC power consumption with $b$ bits $P_{\rm ADC}(b) = 0 \text{ for } b \leq 0$, we assume $P_{\rm ADC}(b) = cf_s2^b$ in \eqref{eq:ADCpower} to hold for $b \in \mathbb{R}$.
Under the constraint of the total ADC power of the conventional fixed-ADC system in which all $N_{\rm RF}$ ADCs are equipped with $\bar b$ bits, the relaxed MMSQE problem is formulated as
\begin{gather}
\label{eq:opt_power}
\hat {\mathbf{{b}}}^{\rm } 
= \argmin_{\mathbf{b}=[b_1,\cdots,b_{N_{\rm RF}}]^\intercal} \sum_{i=1}^{N_{\rm RF}}\mathcal{E}_{y_i}(b_i)
\\ \nonumber
\text{s.t.} \quad \sum_{i=1}^{N_{\rm RF}} P_{\rm ADC}(b_i) \leq N_{\rm RF}P_{\rm ADC}(\bar b),\ {\bf b}\in\mathbb{R}^{N_{\rm RF}}.
\end{gather}
{\color{black} 
Here, $\bar {b}$ is the number of ADC bits for a fixed-ADC system, which we use to give a reference total ADC power in the constraint for the above MMSQE optimization problem.}
Proposition \ref{thm:BA_power} provides the MMSQE-BA solution in a closed form by solving the Karush-Kuhn-Tucker (KKT) conditions for \eqref{eq:opt_power}, which is different from the previously proposed greedy BA approach under a bit constraint in \cite{choi2016space}.
\begin{proposition} 
	\label{thm:BA_power}
	For the relaxed MMSQE problem in \eqref{eq:opt_power}, the optimal number of quantization bits which minimizes the total MSQE is derived as 
	
	\vspace{-1em}
	\small
	\begin{align}
		\label{eq:opt_BA}
		\hat  b^{\rm}_i = \bar b + \log_2\left(\frac{N_{\rm RF}\big(1+{\rm SNR}^{\rm rf}_i\big)^{\frac{1}{3}}}{\sum_{j = 1}^{N_{\rm RF}} \big(1+{\rm SNR}^{\rm rf}_j\big)^{\frac{1}{3}}}\right),\  i = 1,\cdots,N_{\rm RF}
	\end{align}
	\normalsize
	where ${\rm SNR}_i^{\rm rf} = p_u\|[{\bf H}_{\rm b}]_{i,:}\|^2$.
\end{proposition}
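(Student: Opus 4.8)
The plan is to solve the constrained optimization problem in \eqref{eq:opt_power} directly via Lagrangian duality / KKT conditions, exploiting the fact that after the relaxation the objective is convex and the constraint is convex, so the KKT conditions are necessary and sufficient. First I would write out the Lagrangian
\begin{align}
\nonumber
\mathcal{L}(\mathbf{b},\mu) = \sum_{i=1}^{N_{\rm RF}} \frac{\pi\sqrt{3}}{2}\sigma_{y_i}^2 2^{-2b_i} + \mu\left(\sum_{i=1}^{N_{\rm RF}} c f_s 2^{b_i} - N_{\rm RF} c f_s 2^{\bar b}\right),
\end{align}
with $\sigma_{y_i}^2 = p_u\|[{\bf H}_{\rm b}]_{i,:}\|^2 + 1 = 1 + {\rm SNR}_i^{\rm rf}$, and $\mu \ge 0$ the multiplier for the power budget. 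Since both $2^{-2b_i}$ and $2^{b_i}$ are convex in $b_i \in \mathbb{R}$, the problem is convex; note also that the (dropped) constraint $b_i \ge 0$ is what the footnote and relaxation discussion waive, so no extra multipliers are needed there.

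Next I would set $\partial \mathcal{L}/\partial b_i = 0$. Differentiating, $-\ln 2 \cdot \pi\sqrt{3}\,\sigma_{y_i}^2\, 2^{-2b_i} + \mu c f_s \ln 2 \cdot 2^{b_i} = 0$, which rearranges to $2^{3 b_i} = \frac{\pi\sqrt{3}\,\sigma_{y_i}^2}{\mu c f_s}$, i.e.\ $b_i = \frac{1}{3}\log_2\!\big(\pi\sqrt{3}\,\sigma_{y_i}^2/(\mu c f_s)\big)$. This already exposes the $1/3$-power law: $2^{b_i} \propto (\sigma_{y_i}^2)^{1/3} = (1+{\rm SNR}_i^{\rm rf})^{1/3}$. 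I would then impose the power constraint with equality — justified because the objective is strictly decreasing in each $b_i$, so at the optimum all available power is used and $\mu > 0$ — giving $\sum_i 2^{b_i} = N_{\rm RF} 2^{\bar b}$. Substituting $2^{b_i} = \kappa (1+{\rm SNR}_i^{\rm rf})^{1/3}$ with $\kappa = (\pi\sqrt{3}/(\mu c f_s))^{1/3}$ and summing yields $\kappa \sum_j (1+{\rm SNR}_j^{\rm rf})^{1/3} = N_{\rm RF} 2^{\bar b}$, hence $\kappa = N_{\rm RF} 2^{\bar b}/\sum_j (1+{\rm SNR}_j^{\rm rf})^{1/3}$. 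Plugging back,
\begin{align}
\nonumber
2^{\hat b_i} = 2^{\bar b}\cdot \frac{N_{\rm RF}(1+{\rm SNR}_i^{\rm rf})^{1/3}}{\sum_{j=1}^{N_{\rm RF}}(1+{\rm SNR}_j^{\rm rf})^{1/3}},
\end{align}
and taking $\log_2$ of both sides gives exactly \eqref{eq:opt_BA}. I would close by noting that because the relaxed problem is convex with a feasible interior point (Slater's condition trivially holds), this KKT stationary point is the global optimum, and the multiplier $\mu$ is uniquely pinned down by the active power constraint.

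I do not expect a genuine technical obstacle here — the computation is a textbook water-filling-type argument — so the main thing to be careful about is bookkeeping: correctly carrying the constant $\pi\sqrt{3}/2$ through the derivative (the factor $2$ and the $\ln 2$ cancel cleanly, which is why the final answer has no stray constants), and being explicit that the $b_i \ge 0$ constraint has been relaxed away (otherwise one would get a clipped/water-filling solution with $(\,\cdot\,)^+$ truncation rather than the clean closed form stated). A secondary point worth a sentence is justifying that the constraint binds: if it did not, one could increase some $b_i$ and strictly decrease the objective, contradicting optimality, so $\mu>0$ and the equality substitution is legitimate. Everything else is routine algebra, and the derivation also makes transparent the interpretation the paper emphasizes, namely $\hat b_i - \bar b = \frac{1}{3}\log_2(1+{\rm SNR}_i^{\rm rf}) - \log_2\big(\frac1{N_{\rm RF}}\sum_j (1+{\rm SNR}_j^{\rm rf})^{1/3}\big)$, i.e.\ the optimal bit count grows logarithmically in the RF-chain SNR raised to the $1/3$ power.
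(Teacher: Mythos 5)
Your proof is correct and follows essentially the same route as the paper's: a KKT/Lagrangian argument on the convex relaxed problem, showing the power constraint is active and solving the stationarity condition $2^{3b_i}\propto\sigma_{y_i}^2$ for the multiplier. The only cosmetic difference is that the paper first substitutes $z_i=2^{-2b_i}$ to obtain a linear objective with a convex constraint before applying KKT, whereas you differentiate directly in the $b_i$ variables; the algebra and the resulting closed form are identical.
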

\begin{proof}
	See Appendix~\ref{appx:BA_power}.
\end{proof}

In Proposition~\ref{thm:BA_power}, ${\rm SNR}_i^{\rm rf}$ indicates the SNR of the $i$th received signal after analog beamforming $y_i$, which illustrates that the MMSQE-BA \eqref{eq:opt_BA} depends on the channel gain of $\bf y$. 
We remark that the MMSQE-BA has the power of $1/3$ which comes from the relationship between the MSQE $\mathcal{E}_{y_i}(b_i)$ and the ADC power $P_{\rm ADC}(b_i)$ in terms of $b_i$.
Proposition~\ref{thm:BA_power} indicates that the optimal number of the $i$th ADC bits $\hat b_i$ increases logarithmically with $\big(1+\text{SNR}_i^{\rm RF}\big)^{1/3}$ and decreases logarithmically with the sum of  $\big(1+\text{SNR}_j^{\rm RF}\big)^{1/3}$ for $j = 1,\cdots, N_{\rm RF}$.
Accordingly, the ADC pair with the relatively larger aggregated channel gain $\|[{\bf H}_{\rm b}]_{i,:}\|^2$ needs to have more quantization bits to minimize the total quantization distortion. 
Note that since the slowly changing channel characteristics such as large-scale fading and AoA mostly determines the channel gains and sparsity, they are the dominant factors for the BA solution in Proposition \ref{thm:BA_power}.

\begin{algorithm}[!b]
\caption{MMSQE-BA Algorithm}
\label{Power}
\begin{enumerate}
\item Set power constraint $P_{\rm max} = N_{\rm RF} P_{\rm ADC}( \bar{b})$ using~\eqref{eq:ADCpower}
\item Set $\mathbb S = \{1 \ldots N_{\rm RF}\}$ and $P_{\rm total} = 0$
\item {\bf for} $i = 1 \ldots N_{\rm RF} $
  \begin{enumerate}
  \item Compute ${\hat b}_i$ using~\eqref{eq:opt_BA} and $b_i = \max(0, \lceil {\hat b}_i \rceil)$
  \item {\bf if} ($b_i = 0$), $\mathbb S = \mathbb S - \{i\}$
  \item {\bf else} $p_i = P_{\rm ADC}( b_i )$ and $P_{\rm total} = P_{\rm total} + p_i$ 
\begin{enumerate}
\item [$\circ$]{\bf if} ($\hat b_i \in \mathbb Z$), $\mathbb S = \mathbb S - \{i\}$
\end{enumerate}
  \end{enumerate}
\item {\bf if} $P_{\rm total} \le P_{\rm max}$, {\bf  return b}
\item {\bf for} $i \in \mathbb S$, compute $T_i = T(i)$ using~\eqref{eq:tradeoff}
\item {\bf while} $P_{\rm total} > P_{\rm max}$ \label{while}
  \begin{enumerate}
  \item $i^{*} = {\rm argmin}_{i \in \mathbb S} T_i$
  \item $b_{i^{*}} = {b}_{i^{*}} - 1$ and $\mathbb S = \mathbb S - \{i^{*}\}$
  \item $P_{\rm total} = P_{\rm total} - p_{i^*} + P_{\rm ADC}(b_{i^*})$
  \end{enumerate}
\item {\bf return b}
\end{enumerate}
\end{algorithm}

Since $\hat b_i$ in \eqref{eq:opt_BA} is a real number solution, we need to map it into non-negative integers.
Although the nearest integer mapping can be applied to the solution, it ignores the tradeoff between power consumption and quantization error and can violate the power constraint after the mapping. 
As an alternative, we propose a tradeoff mapping that is power efficient.
First, the negative quantization bits ($\hat b^{\rm}_i < 0$) are mapped to zero, i.e., the ADC pairs with $\hat b_i \leq 0$ are deactivated. 
Note that this mapping does not violate the actual power constraint as $P_{\rm ADC}(b) = 0 \text{ for } b \leq 0$.
Next, we map positive non-integer quantization bits ($\hat b_i >0, \hat b_i \notin \mathbb{Z}$) to $\lceil  \hat b^{\rm }_i \rceil$. 
If the power constraint is violated, i.e., $\sum_{i\in\mathbb{S}_+}P_{\rm ADC}(\lceil \hat b_i \rceil) > N_{\rm RF} P_{\rm ADC}(\bar b)$ where $\mathbb{S}_+ = \{ i\, |\, \hat b_i >0 \}$, we need to map the subset of the positive non-integer quantization bits to $\lfloor \hat b^{\rm}_i \rfloor$ instead of $\lceil  \hat b^{\rm }_i \rceil$.
Notice that the $\lfloor \hat b^{\rm}_i \rfloor$-mapping reduces the power consumption while increasing the MSQE.
In this regard, we need to find the best subset to perform power-efficient $\lfloor \hat b^{\rm}_i \rfloor$-mapping.

To determine the best subset of the positive non-integer quantization bits for $\lceil \hat b_i \rceil$, we propose a tradeoff function 

\vspace{-1em}
\small
 \begin{align}
	 \label{eq:tradeoff}
	 T(i) =\left|\frac{\mathcal{E}_{i}(\hat b^{\rm}_i) -  \mathcal{E}_{i}(\lfloor \hat b^{\rm }_i \rfloor) }{P_{\rm ADC}( \hat b^{\rm }_i)-P_{\rm ADC}(\lfloor \hat  b^{\rm}_i \rfloor)}\right| 
	\to \frac{2^{-2\lfloor \hat b^{\rm}_i\rfloor}-2^{-2\hat b^{\rm }_i}}{2^{ \hat b^{\rm}_i}-2^{\lfloor \hat b^{\rm }_i\rfloor}}\sigma^2_{ y_i}.
 \end{align}
\normalsize
The proposed function in \eqref{eq:tradeoff} represents the MSQE increase per unit power savings after mapping $\hat b^{\rm}_i$ to $\lfloor \hat b^{\rm}_i \rfloor$.
For the $\lfloor \hat b^{\rm}_i \rfloor$-mapping, $ \hat b^{\rm}_i$ with the smallest $T(i)$ is re-mapped to $\lfloor \hat b_i \rfloor$ from $\lceil \hat b_i \rceil$ to achieve the best tradeoff of quantization error vs. power consumption.
This repeats for $\hat b^{\rm}_i$ with the next smallest $T(i)$ until the power constraint is satisfied.
Algorithm~\ref{Power} shows the proposed MMSQE-BA algorithm. 
The while-loop at line~\ref{while} will always end as this mapping algorithm can always satisfy the power constraint from the following reasons: (i) for $ \hat b^{\rm}_i <0$, the 0-bit mapping does not increase power, and (ii) for $\hat b_i >0$, the total ADC power consumption always becomes $\sum_{i\in \mathbb{S}_+}P_{\rm ADC}(\lfloor  \hat b^{\rm}_i \rfloor) \leq \sum_{i\in \mathbb{S}_+}P_{\rm ADC}( \hat b^{\rm }_i)$.

Note that the constant term in $1+ {\rm SNR}_i^{\rm rf}$ of \eqref{eq:opt_BA} comes from the additive noise $ \bf n$ in \eqref{eq:y}.
Due to this noise term, the MMSQE-BA $\hat b_i$ would be almost the same for all ADCs when the transmit power $p_u$ is small.
In other words, in the low SNR regime, the noise term in $\hat b_i$ becomes dominant $( 1\gg { \rm SNR}_i^{\rm rf},\ i = 1, \cdots, N_{\rm RF})$.
This leads to $\hat b_i \approx \bar b$ for $i= 1,\cdots, N_{\rm RF}$.
The intuition behind this is that since we minimize the total MSQE of $\bf y$, which always includes the noise, the MMSQE-BA $\hat b_i$ minimizes mostly the quantization error of the noise in the low SNR regime, not the desired signal. 
Consequently, uniform bit allocation ($\hat b_i = \bar b$) across all the ADCs is likely to appear in the low SNR regime. 
In this perspective, the MMSQE-BA becomes more effective as the SNR increases while providing similar performance as fixed-ADCs in the low SNR regime.
In Section \ref{subsec:BA_Power_Modified}, we revise the MMSQE-BA to overcome such noise-dependency.

\subsection{Revised MMSQE Bit Allocation}
\label{subsec:BA_Power_Modified}

The MMSQE-BA \eqref{eq:opt_BA} is dependent to the additive noise as it minimizes the quantization error of $y_i$, not solely the desired signal. Accordingly, the MMSQE-BA is less effective in the low SNR regime. 
To address this problem, we modify the previous MMSQE problem \eqref{eq:opt_power} by considering to minimize the quantization error of only the desired signal. 
We ignore the additive noise ${\bf n}$ in $\bf y$ and consider the quantization of the desired signal ${\bf x} =\sqrt{p_u} {\bf  H_{\rm b}}\bf s$ at the ADCs. 
According to the AQNM, we can model the quantization of ${\bf x}$ as
\begin{align}
\nonumber
{\bf x}_{\rm q} =  \sqrt{p_u}{\bf D}_{\alpha}{\bf H}_{\rm b}{\bf s} + \hat {\bf n}_{\rm q}
\end{align}
where $ \hat {\bf n}_{\rm q}$ is the additive quantization noise uncorrelated with $\bf x_{\rm q}$.
The corresponding MSQE for the $i$th signal $x_i$ becomes
\begin{align}
\label{eq:msqe_rev}
{\mathcal{E}}_{x_i}(b_i) 
&= \mathbb{E}\Big[|{x}_i - {x}_{{\rm q}i}|^2\Big]= \frac{\pi\sqrt{3}}{2}\sigma_{x_i}^2\,2^{-2b_i}
\end{align}
where $\sigma^2_{x_i} = p_u\|[{\bf H}_{\rm b}]_{i,:}\|^2$. 
Using \eqref{eq:msqe_rev}, we formulate the revised MMSQE problem as
\begin{gather}
\label{eq:opt_power_rev}
\hat {\mathbf{{b}}}^{rev } 
= \argmin_{\mathbf{b}=[b_1,\cdots,b_{N_{\rm RF}}]^\intercal} \sum_{i=1}^{N_{\rm RF}}\mathcal{E}_{x_i}(b_i)
\\ \nonumber
\text{s.t.} \quad \sum_{i=1}^{N_{\rm RF}} P_{\rm ADC}(b_i) \leq N_{\rm RF}P_{\rm ADC}(\bar b),\ {\bf b}\in\mathbb{R}^{N_{\rm RF}}.
\end{gather}
{\color {black} 
Note that while the MMSQE-BA algorithm in Section \ref{subsec:BA_Power} is developed with the proper AQNM quantization modeling \eqref{eq:AQNM2}, 
the revised MMSQE-BA (revMMSQE-BA) algorithm will be developed based on the quantization modeling only for the desired signal term in \eqref{eq:y}.
Consequently, this modeling approach may be inaccurate since the actual quantization process involves noise.
Adopting the GMI which serves a lower bound on the channel capacity \cite{zhang2012general, zhang2016remark}, however, we show that \eqref{eq:opt_power_rev} is equivalent to maximizing the GMI in the low SNR regime.
Under the assumptions of IID Gaussian signaling $s_i \sim \mathcal{CN}(0, 1)$ and applying a linear combiner ${\bf W}$ to the quantized signal ${\bf y}_{\rm q}$ with nearest-neighbor decoding, the GMI of user $n$ \cite{liang2016mixed} is expressed as 
\begin{align}
   	\label{eq:GMI}
    	I_{n}^{\rm GMI}({\bf w}_n, {\bf b}) = \log_2\bigg(1+\frac{\kappa({\bf w}_n,\ {\bf b})}	{1-\kappa({\bf w}_n,\ {\bf b})}\bigg)
\end{align}
where
\begin{align}
	\label{eq:kappa}
	\kappa({\bf w}_n,\ {\bf b}) = \frac{\big|\mathbb{E}[{\bf w}^H_n{\bf y}_{\rm q} \sqrt{p_u}s_n]\big|^2}{p_u\mathbb{E}[|{\bf w}_n^H{\bf y}_{\rm q}|^2]} = \frac{{\bf w}_n^H {\bf R}_{{\bf y}_{\rm q}s_n} {\bf R}^H_{{\bf y}_{\rm q}s_n}{\bf w}_n}{{\bf w}_n^H {\bf R}_{{\bf y}_{\rm q}{\bf y}_{\rm q}}{\bf w}_n}
\end{align}

\begin{proposition}
    \label{pr:maxGMI}
    Using the IID Gaussian signaling and linear combiner ${\bf W}$ to the quantized signal ${\bf y}_{\rm q}$ with nearest-neighbor decoding, the revised MMSQE problem \eqref{eq:opt_power_rev} is equivalent to \eqref{eq:maxGMI} in the low SNR regime.
\begin{gather}
        \label{eq:maxGMI}
        \hat {\mathbf{{b}}}^{GMI} 
        = \argmax_{{\bf w}_n,\ {\bf b}} \sum_{n=1}^{N_{u}} I_{n}^{\rm GMI}({\bf w}_n, {\bf b}) 
        \\ \nonumber
        \text{s.t.} \quad \sum_{i=1}^{N_{\rm RF}} P_{\rm ADC}(b_i) \leq N_{\rm RF}P_{\rm ADC}(\bar b), \ {\bf b} \in\mathbb{R}^{N_{\rm RF}}.
    \end{gather}
\end{proposition}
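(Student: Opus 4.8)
The plan is to first eliminate the linear combiners ${\bf w}_n$ from \eqref{eq:maxGMI} in closed form, reducing the sum-GMI to a function of ${\bf b}$ only, and then to show that its leading-order behaviour as $p_u \to 0$ coincides, up to a ${\bf b}$-independent positive scaling and an additive constant, with the objective of \eqref{eq:opt_power_rev} (the constraints being literally identical). For a fixed ${\bf b}$, $\kappa({\bf w}_n,{\bf b})$ in \eqref{eq:kappa} is a generalized Rayleigh quotient whose numerator matrix ${\bf R}_{{\bf y}_{\rm q}s_n}{\bf R}^H_{{\bf y}_{\rm q}s_n}$ has rank one; a Cauchy--Schwarz argument in the ${\bf R}_{{\bf y}_{\rm q}{\bf y}_{\rm q}}$-inner product then shows that the maximizer over ${\bf w}_n$ is the MMSE-type combiner ${\bf w}_n^\star \propto {\bf R}_{{\bf y}_{\rm q}{\bf y}_{\rm q}}^{-1}{\bf R}_{{\bf y}_{\rm q}s_n}$ and that the maximal value is $\kappa_n^\star = {\bf R}^H_{{\bf y}_{\rm q}s_n}{\bf R}_{{\bf y}_{\rm q}{\bf y}_{\rm q}}^{-1}{\bf R}_{{\bf y}_{\rm q}s_n}$. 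Since $I_n^{\rm GMI}(\cdot) = -\log_2(1-\kappa)$ is strictly increasing in $\kappa$, the inner maximization turns \eqref{eq:maxGMI} into $\max_{\bf b}\, \big(-\sum_{n=1}^{N_u}\log_2(1-\kappa_n^\star)\big)$ under the same power budget.

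Next I would evaluate the two correlation matrices from the AQNM expression \eqref{eq:AQNM2}: using $\mathbb{E}[{\bf s}{\bf s}^H]={\bf I}$, $\mathbb{E}[{\bf n}{\bf n}^H]={\bf I}$, and the AQNM property that ${\bf n}_{\rm q}$ is uncorrelated with the quantizer input ${\bf y}$, one gets ${\bf R}_{{\bf y}_{\rm q}s_n} = \sqrt{p_u}\,{\bf D}_\alpha [{\bf H}_{\rm b}]_{:,n}$ and ${\bf R}_{{\bf y}_{\rm q}{\bf y}_{\rm q}} = p_u{\bf D}_\alpha{\bf H}_{\rm b}{\bf H}_{\rm b}^H{\bf D}_\alpha + {\bf D}_\alpha^2 + {\bf R}_{{\bf n}_{\rm q}{\bf n}_{\rm q}}$, with ${\bf R}_{{\bf n}_{\rm q}{\bf n}_{\rm q}}$ as in Section~\ref{subsec:quantization}. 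In the low-SNR regime $p_u\to 0$ the signal term is $O(p_u)$ and ${\rm diag}(p_u{\bf H}_{\rm b}{\bf H}_{\rm b}^H+{\bf I}_{N_{\rm RF}})\to {\bf I}_{N_{\rm RF}}$, so ${\bf R}_{{\bf y}_{\rm q}{\bf y}_{\rm q}}\to {\bf D}_\alpha^2 + {\bf D}_\alpha{\bf D}_\beta = {\bf D}_\alpha({\bf D}_\alpha + {\bf D}_\beta) = {\bf D}_\alpha$, where the last step uses $\alpha_i+\beta_i=1$. Substituting gives $\kappa_n^\star \to p_u\,[{\bf H}_{\rm b}]_{:,n}^H{\bf D}_\alpha[{\bf H}_{\rm b}]_{:,n} = p_u\sum_{i=1}^{N_{\rm RF}}\alpha_i|h_{{\rm b},(i,n)}|^2$, which is itself $O(p_u)$.

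Finally, since $\kappa_n^\star = O(p_u)\to 0$, I would expand $-\log_2(1-\kappa_n^\star) = \kappa_n^\star/\ln 2 + O((\kappa_n^\star)^2)$, so that to leading order $\sum_{n}I_n^{\rm GMI} = \frac{1}{\ln 2}\sum_{n=1}^{N_u}\kappa_n^\star = \frac{p_u}{\ln 2}\sum_{i=1}^{N_{\rm RF}}\alpha_i\sum_{n=1}^{N_u}|h_{{\rm b},(i,n)}|^2 = \frac{p_u}{\ln 2}\sum_{i=1}^{N_{\rm RF}}\alpha_i\,\|[{\bf H}_{\rm b}]_{i,:}\|^2$. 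Writing $\alpha_i = 1-\beta_i$ and dropping the ${\bf b}$-independent term $\sum_i\|[{\bf H}_{\rm b}]_{i,:}\|^2$, maximizing the sum-GMI becomes equivalent to minimizing $\sum_i\beta_i\|[{\bf H}_{\rm b}]_{i,:}\|^2$; with $\beta_i \approx \frac{\pi\sqrt 3}{2}2^{-2b_i}$ and $\sigma_{x_i}^2 = p_u\|[{\bf H}_{\rm b}]_{i,:}\|^2$ from \eqref{eq:msqe_rev}, this quantity equals $\frac{1}{p_u}\sum_i\mathcal{E}_{x_i}(b_i)$, which is precisely the objective of \eqref{eq:opt_power_rev}; since the constraint is identical, the two optimization problems share the same optimizers in the low-SNR regime, as claimed.

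The step I expect to be the main obstacle is making the $p_u\to 0$ passage rigorous rather than heuristic: one must show that the $O(p_u)$ correction to ${\bf R}_{{\bf y}_{\rm q}{\bf y}_{\rm q}}$ (both the signal term and the $p_u$-dependent part of ${\bf R}_{{\bf n}_{\rm q}{\bf n}_{\rm q}}$) perturbs ${\bf R}_{{\bf y}_{\rm q}{\bf y}_{\rm q}}^{-1}$, and hence each $\kappa_n^\star$, by $O(p_u^2)$ \emph{uniformly} over the feasible set of ${\bf b}$, so that it is the argmax and not merely the optimal value that converges; the logarithm linearization requires the same uniformity. A secondary subtlety worth flagging is that $\alpha_i$ and $\beta_i$ are themselves functions of the optimization variable $b_i$, so the claimed equivalence should be understood as: after normalizing by the ${\bf b}$-independent factor $p_u/\ln 2$ and discarding the additive constant, the leading-order sum-GMI objective coincides with the revised MMSQE objective, whence the two problems (with identical constraints) have identical KKT systems and hence identical solutions.
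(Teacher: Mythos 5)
Your proposal is correct and follows essentially the same route as the paper's Appendix~B proof: plug in the optimal combiner ${\bf w}_n^{\rm opt}={\bf R}_{{\bf y}_{\rm q}{\bf y}_{\rm q}}^{-1}{\bf R}_{{\bf y}_{\rm q}s_n}$ (which the paper simply cites from \cite{liang2016mixed} rather than rederiving via Cauchy--Schwarz), show ${\bf R}_{{\bf y}_{\rm q}{\bf y}_{\rm q}}\to{\bf D}_\alpha$ as $p_u\to 0$ using $\alpha_i+\beta_i=1$, linearize the GMI so that $\sum_n I_n^{\rm GMI}\approx\sum_n\kappa_n^{\star}=p_u\sum_i\alpha_i\|[{\bf H}_{\rm b}]_{i,:}\|^2$, and drop the ${\bf b}$-independent term to recover the objective of \eqref{eq:opt_power_rev}. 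Your added remarks on uniformity of the limit over the feasible set are a refinement the paper does not attempt, but they do not change the argument.
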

\begin{proof}
See Appendix \ref{appx:maxGMI}.
\end{proof}
}
Now, we solve \eqref{eq:opt_power_rev} and derive the revMMSQE-BA solution $\hat {\bf b}^{rev}$ in the following proposition.
\begin{proposition} 
	\label{thm:BA_power_rev}
	Assuming $\|[{\bf H}_{\rm b}]_{i,:}\| \neq 0$ for $i = 1,\cdots,N_{\rm RF}$, the optimal number of quantization bits which minimizes the total MSQE of desired signals $\bf x$ for the revised MMSQE problem \eqref{eq:opt_power_rev} is 
	
	\vspace{-1em}
	\small
	\begin{align}
		\label{eq:opt_BA_rev}
		\hat b^{rev}_i = \bar b + \log_2\left(\frac{{N_{\rm RF}}\|[{\bf H}_{\rm b}]_{i,:}\|^{\frac{2}{3}}}{\sum_{j = 1}^{N_{\rm RF}}\|[{\bf H}_{\rm b}]_{j,:}\|^{\frac{2}{3}}}\right), \  i = 1,\cdots, N_{\rm RF}.
	\end{align}
	\normalsize
\end{proposition}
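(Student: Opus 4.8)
The plan is to mirror the proof of Proposition~\ref{thm:BA_power} almost verbatim, since the revised problem \eqref{eq:opt_power_rev} has exactly the same structure as \eqref{eq:opt_power} with the single substitution $\sigma_{y_i}^2 = p_u\|[{\bf H}_{\rm b}]_{i,:}\|^2 + 1 \mapsto \sigma_{x_i}^2 = p_u\|[{\bf H}_{\rm b}]_{i,:}\|^2$. Concretely, I would first write the Lagrangian
\begin{align}
\nonumber
\mathcal{L}({\bf b},\mu) = \sum_{i=1}^{N_{\rm RF}} \frac{\pi\sqrt{3}}{2}\sigma_{x_i}^2 2^{-2b_i} + \mu\left(\sum_{i=1}^{N_{\rm RF}} c f_s 2^{b_i} - N_{\rm RF} c f_s 2^{\bar b}\right),
\end{align}
note that the objective is convex and the constraint is convex in the relaxed variables ${\bf b}\in\mathbb{R}^{N_{\rm RF}}$, so the KKT conditions are necessary and sufficient for the global optimum.

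Next I would set $\partial\mathcal{L}/\partial b_i = 0$, which gives $-2\ln 2\cdot\frac{\pi\sqrt{3}}{2}\sigma_{x_i}^2 2^{-2b_i} + \mu\,\ln 2\cdot c f_s 2^{b_i} = 0$, i.e. $2^{3b_i} \propto \sigma_{x_i}^2/\mu$, so $2^{b_i} = C\,\sigma_{x_i}^{2/3}$ for a constant $C$ common to all $i$ (here the hypothesis $\|[{\bf H}_{\rm b}]_{i,:}\|\neq 0$ is what guarantees $\sigma_{x_i}^2>0$, hence $b_i$ finite and the stationarity equation well-posed for every $i$ — this is the one place the extra assumption in the statement is used). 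Then I would impose the active power constraint $\sum_i c f_s 2^{b_i} = N_{\rm RF} c f_s 2^{\bar b}$ (active because the objective is strictly decreasing in each $b_i$), substitute $2^{b_i} = C\sigma_{x_i}^{2/3}$ to solve $C = N_{\rm RF} 2^{\bar b}/\sum_j \sigma_{x_j}^{2/3}$, and finally take $\log_2$ to obtain
\begin{align}
\nonumber
\hat b_i^{rev} = \bar b + \log_2\!\left(\frac{N_{\rm RF}\,\sigma_{x_i}^{2/3}}{\sum_{j=1}^{N_{\rm RF}}\sigma_{x_j}^{2/3}}\right) = \bar b + \log_2\!\left(\frac{N_{\rm RF}\,\|[{\bf H}_{\rm b}]_{i,:}\|^{2/3}}{\sum_{j=1}^{N_{\rm RF}}\|[{\bf H}_{\rm b}]_{j,:}\|^{2/3}}\right),
\end{align}
using $\sigma_{x_i}^2 = p_u\|[{\bf H}_{\rm b}]_{i,:}\|^2$ so that $\sigma_{x_i}^{2/3} = p_u^{1/3}\|[{\bf H}_{\rm b}]_{i,:}\|^{2/3}$ and the $p_u^{1/3}$ cancels between numerator and denominator. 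This is exactly \eqref{eq:opt_BA_rev}.

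There is essentially no hard step here — the result is a routine water-filling-type KKT computation, and the only subtlety worth flagging is the degeneracy when some $\|[{\bf H}_{\rm b}]_{i,:}\|=0$: then $\sigma_{x_i}^2=0$, the objective no longer depends on $b_i$, and the optimal $b_i$ is not pinned down by the MSQE objective (one would naturally set it to $0$ or $-\infty$ in the relaxed problem and deactivate that ADC), which is precisely why the proposition explicitly assumes $\|[{\bf H}_{\rm b}]_{i,:}\|\neq 0$ for all $i$. I would therefore just cite the proof of Proposition~\ref{thm:BA_power}, point out that replacing $1+{\rm SNR}_i^{\rm rf}$ by ${\rm SNR}_i^{\rm rf} = p_u\|[{\bf H}_{\rm b}]_{i,:}\|^2$ throughout the KKT analysis yields \eqref{eq:opt_BA_rev}, and note that convexity makes the stationary point the global minimizer. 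As with \eqref{eq:opt_BA}, the real-valued $\hat b_i^{rev}$ would then be mapped to nonnegative integers by the same tradeoff-mapping procedure (Algorithm~\ref{Power}) in the algorithmic implementation, but that is outside the scope of the proposition itself.
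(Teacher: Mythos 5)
Your proof is correct and follows essentially the same route as the paper: the paper's own proof simply substitutes $c_i=\sigma_{x_i}^2$ into the transformed problem \eqref{eq:trans_problem} and reuses the KKT argument of Proposition~\ref{thm:BA_power}, which is exactly your plan, the only cosmetic difference being that you work the Lagrangian directly in the $b_i$ variables while the paper changes variables to $z_i=2^{-2b_i}$ first. Your observation about where the hypothesis $\|[{\bf H}_{\rm b}]_{i,:}\|\neq 0$ enters is a correct (and welcome) addition the paper leaves implicit.
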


\begin{proof}
Replacing $\sigma^2_{y_i}$ with $ \sigma^2_{x_i}$ ($c_i =\sigma^2_{x_i}$) in \eqref{eq:trans_problem} and following the same steps in the proof of Proposition \ref{thm:BA_power} in Appendix \ref{appx:BA_power}, we obtain \eqref{eq:BA_mid}.
Then, \eqref{eq:opt_BA_rev} is obtained by putting $z_i = 2^{-2b_i}$, $\bar z = 2^{-2\bar b}$ and $c_i = \sigma^2_{x_i}$ into \eqref{eq:BA_mid}.
\end{proof}

{\color{black}
\begin{corollary}
    The revMMSQE-BA solution $\hat {\bf b}^{rev}$ maximizes the GMI in the low SNR regime and minimizes the quantization error of the beam-domain received signal ${\bf y}$ in the high SNR. 
\end{corollary}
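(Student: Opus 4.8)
The plan is to dispatch the two halves of the corollary separately, as each follows by composing results already in hand. For the low-SNR half, I would chain Proposition~\ref{pr:maxGMI} with Proposition~\ref{thm:BA_power_rev}. Proposition~\ref{thm:BA_power_rev} establishes that $\hat{\bf b}^{rev}$ of \eqref{eq:opt_BA_rev} is a minimizer of the revised MMSQE program \eqref{eq:opt_power_rev}, while Proposition~\ref{pr:maxGMI} shows that, in the low-SNR regime, \eqref{eq:opt_power_rev} has the same feasible set and (up to the low-SNR linearization) the same objective ordering as the GMI-maximization program \eqref{eq:maxGMI}. Consequently the minimizer of \eqref{eq:opt_power_rev} is a maximizer of $\sum_{n=1}^{N_u} I_n^{\rm GMI}({\bf w}_n,{\bf b})$ over the same ADC-power-constrained feasible set, which is precisely the first assertion. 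The only point to spell out is that the equivalence in Proposition~\ref{pr:maxGMI} is an equivalence of optimization problems, so that argmin and argmax transfer exactly once the low-SNR simplification of $\kappa({\bf w}_n,{\bf b})$ in \eqref{eq:kappa} has been made.

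For the high-SNR half, I would compare the closed-form allocations \eqref{eq:opt_BA} and \eqref{eq:opt_BA_rev} directly. With ${\rm SNR}_i^{\rm rf}=p_u\|[{\bf H}_{\rm b}]_{i,:}\|^2$ and the assumption $\|[{\bf H}_{\rm b}]_{i,:}\|\neq 0$ for all $i$, letting $p_u\to\infty$ forces $\min_i{\rm SNR}_i^{\rm rf}\gg 1$, hence $1+{\rm SNR}_i^{\rm rf}=(1+o(1))\,{\rm SNR}_i^{\rm rf}$ uniformly in $i$. Substituting this into \eqref{eq:opt_BA} and cancelling the common factor $p_u^{1/3}$ from the numerator and denominator of the log-argument yields
\begin{align}
\nonumber
\hat b_i &= \bar b + \log_2\!\left(\frac{N_{\rm RF}\big(1+{\rm SNR}_i^{\rm rf}\big)^{1/3}}{\sum_{j=1}^{N_{\rm RF}}\big(1+{\rm SNR}_j^{\rm rf}\big)^{1/3}}\right) \\ \nonumber
&= \bar b + \log_2\!\left(\frac{N_{\rm RF}\,\|[{\bf H}_{\rm b}]_{i,:}\|^{2/3}}{\sum_{j=1}^{N_{\rm RF}}\|[{\bf H}_{\rm b}]_{j,:}\|^{2/3}}\right) + o(1) = \hat b_i^{rev} + o(1),
\end{align}
so $\hat{\bf b}^{rev}$ agrees asymptotically with the MMSQE-BA solution $\hat{\bf b}$ of \eqref{eq:opt_power}. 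By Proposition~\ref{thm:BA_power}, $\hat{\bf b}$ is exactly the allocation minimizing the total MSQE $\sum_i \mathcal{E}_{y_i}(b_i)$ of the beam-domain received signal ${\bf y}$ in \eqref{eq:y} subject to the reference ADC-power budget, and therefore $\hat{\bf b}^{rev}$ inherits this minimizing property in the high-SNR limit.

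The step I expect to demand the most care is making ``maximizes the GMI in the low SNR regime'' precise. Proposition~\ref{pr:maxGMI} delivers the equivalence of \eqref{eq:opt_power_rev} and \eqref{eq:maxGMI} only after the low-SNR linearization $\log_2(1+x)\approx x/\ln 2$ and the attendant simplification of $\kappa$, so strictly speaking the first clause is a leading-order (in $p_u$) optimality statement rather than exact GMI-optimality; I would phrase it as ``to first order in the transmit power'' in the write-up. Likewise, the high-SNR clause is an asymptotic $o(1)$ statement, and both clauses pertain to the real-relaxed bit variables prior to the integer/tradeoff mapping of Algorithm~\ref{Power}; I would note this explicitly so the corollary is not over-claimed. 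No genuinely new estimate is needed beyond the uniform bound $1+{\rm SNR}_i^{\rm rf}=(1+o(1)){\rm SNR}_i^{\rm rf}$, which is where the nonvanishing-norm hypothesis of Proposition~\ref{thm:BA_power_rev} does its work.
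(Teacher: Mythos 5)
Your proposal is correct and follows essentially the same route as the paper: the low-SNR clause is dispatched by invoking Proposition~\ref{pr:maxGMI} together with Proposition~\ref{thm:BA_power_rev}, and the high-SNR clause by observing that $1+{\rm SNR}_i^{\rm rf}\approx{\rm SNR}_i^{\rm rf}$ makes $\hat{\bf b}$ of \eqref{eq:opt_BA} collapse to $\hat{\bf b}^{rev}$ of \eqref{eq:opt_BA_rev}, so the latter inherits the MSQE-minimizing property of Proposition~\ref{thm:BA_power}. Your added caveats (leading-order/asymptotic nature of both clauses, and that they apply to the real-relaxed bits before integer mapping) are sound refinements of the paper's terser two-line argument.
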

\begin{proof}
    When the SNR is low, Proposition \ref{pr:maxGMI} holds. 
    For the high SNR, the MMSQE-BA solution reduces to the revMMSQE-BA solution, $\hat{\bf b} \to \hat{\bf b}^{rev}$, as ${\rm SNR}^{rf}_i \gg 1$. 
\end{proof}
}
Accordingly, even in the low SNR regime, we can selectively assign ADC bits to maximize GMI, which can be considered as maximizing achievable rate.
In this regard, the revMMSQE-BA provides noise-robust BA performance.
Similar non-negative integer mapping can be performed by replacing $\sigma^2_{y_i}$ in \eqref{eq:tradeoff} with $\sigma^2_{x_i}$.


{\color {black} 
\subsection{Capacity Anaylsis with Bit Allocation}
\label{subsec:capacity}

In this subsection, we analyze the capacity under the considered system model \eqref{eq:AQNM2} for given $({\bf b}, {\bf H}_{\rm b})$ when the SNR is low. 
Let $\eta = {\bf D}_{\alpha} {\bf n} + {\bf n}_{\rm q}$, then the capacity is expressed as
\begin{align}
	\label{eq:capacity}
	C\big({\bf b}, {\bf H}_{\rm b}\big) = \log_2\bigg|{\bf I}_{N_{\rm RF}} + p_u {\bf R}_{\eta\eta}^{-1}{\bf D}_{\alpha}{\bf H}_{\rm b}{\bf H}_{\rm b}^H{\bf D}_{\alpha}^H \bigg|
\end{align}
\normalsize
where ${\bf R}_{\eta\eta} = {\bf D}_{\alpha}{\bf D}_{\alpha}^H + {\bf R}_{{\bf n}_{\rm q}{\bf n}_{\rm q}}$.

\begin{lemma}
	\label{lem:capacity_lowSNR}
	For a given ADC bit allocation ${\bf b}$, the capacity of \eqref{eq:AQNM2} in the low SNR regime is approximated as
	\begin{align}
		\label{eq:capacity_lowSNR}
		C_{low}\big({\bf b}, {\bf H}_{\rm b}\big) = \log_2\Bigg(1+\sum_{i = 1}^{N_{\rm RF}} \frac{p_u \alpha_i \|[{\bf H}_{\rm b}]_{i,:}\|^2}{1+p_u(1-\alpha_i)\|[{\bf H}_{\rm b}]_{i,:}\|^2}\Bigg).
	\end{align}
\end{lemma}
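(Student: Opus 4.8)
The plan is to start from the exact capacity expression \eqref{eq:capacity} and exploit the low-SNR regime to reduce the log-det to a sum of scalar terms. First I would observe that in the low SNR regime, two simplifications kick in: (i) the cross-correlation structure of the quantization noise becomes negligible relative to its diagonal, and more importantly (ii) the off-diagonal coupling between RF chains in ${\bf H}_{\rm b}{\bf H}_{\rm b}^H$ contributes only higher-order terms in $p_u$. Concretely, using $\log_2|{\bf I} + {\bf M}| = \sum_k \log_2(1+\lambda_k({\bf M}))$ together with the fact that for small $p_u$ the matrix ${\bf M} = p_u {\bf R}_{\eta\eta}^{-1}{\bf D}_\alpha {\bf H}_{\rm b}{\bf H}_{\rm b}^H {\bf D}_\alpha^H$ is itself $O(p_u)$, I would expand $\log_2|{\bf I}+{\bf M}| \approx \frac{1}{\ln 2}\,{\rm tr}({\bf M})$ to leading order. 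Since ${\rm tr}$ only sees the diagonal of ${\bf M}$, and ${\bf R}_{\eta\eta}$ and ${\bf D}_\alpha$ are diagonal, this immediately yields $\sum_i \frac{p_u \alpha_i^2 \|[{\bf H}_{\rm b}]_{i,:}\|^2}{[{\bf R}_{\eta\eta}]_{i,i}}$.

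Next I would plug in the explicit diagonal entries of ${\bf R}_{\eta\eta}$. From the definitions, ${\bf R}_{\eta\eta} = {\bf D}_\alpha {\bf D}_\alpha^H + {\bf R}_{{\bf n}_{\rm q}{\bf n}_{\rm q}}$ with ${\bf R}_{{\bf n}_{\rm q}{\bf n}_{\rm q}} = {\bf D}_\alpha {\bf D}_\beta\,{\rm diag}(p_u{\bf H}_{\rm b}{\bf H}_{\rm b}^H + {\bf I}_{N_{\rm RF}})$, so $[{\bf R}_{\eta\eta}]_{i,i} = \alpha_i^2 + \alpha_i\beta_i(p_u\|[{\bf H}_{\rm b}]_{i,:}\|^2 + 1)$. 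Using $\beta_i = 1-\alpha_i$ and collecting terms, $[{\bf R}_{\eta\eta}]_{i,i} = \alpha_i^2 + \alpha_i(1-\alpha_i) + \alpha_i(1-\alpha_i)p_u\|[{\bf H}_{\rm b}]_{i,:}\|^2 = \alpha_i\big(1 + p_u(1-\alpha_i)\|[{\bf H}_{\rm b}]_{i,:}\|^2\big)$. Substituting this into the trace expression, one $\alpha_i$ cancels, giving $\sum_i \frac{p_u \alpha_i \|[{\bf H}_{\rm b}]_{i,:}\|^2}{1+p_u(1-\alpha_i)\|[{\bf H}_{\rm b}]_{i,:}\|^2}$, which is exactly the argument of the logarithm minus one. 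Finally, to write the answer as a single $\log_2(1+\cdot)$ rather than a sum of logs, I would invoke the same low-SNR/first-order consistency: $\sum_i \log_2(1+x_i) \approx \log_2(1+\sum_i x_i)$ when all $x_i$ are small, which matches the claimed form \eqref{eq:capacity_lowSNR}.

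The main obstacle I anticipate is making the two approximation steps rigorous and mutually consistent — specifically, justifying that dropping the off-diagonal terms of ${\bf D}_\alpha {\bf H}_{\rm b}{\bf H}_{\rm b}^H {\bf D}_\alpha^H$ inside the log-det, and replacing $\sum \log(1+x_i)$ by $\log(1+\sum x_i)$, are both valid at the \emph{same} order in $p_u$ and do not secretly discard terms of the order being retained. The cleanest route is probably to argue that every step is an equality up to $o(p_u)$ as $p_u \to 0$ (keeping ${\bf H}_{\rm b}$ fixed), so that the stated formula is the first-order Taylor expansion of the true capacity in $p_u$; the off-diagonal entries of ${\bf H}_{\rm b}{\bf H}_{\rm b}^H$ enter ${\rm tr}({\bf M}^2)$ and higher, hence are $O(p_u^2)$, and likewise the gap between $\sum\log(1+x_i)$ and $\log(1+\sum x_i)$ is $O(p_u^2)$. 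A secondary subtlety worth a remark is that $\alpha_i$ depends on $b_i$ only through the AQNM table/formula and is treated as a fixed constant here, so no $p_u$-dependence sneaks in through ${\bf D}_\alpha$.
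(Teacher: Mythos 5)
Your proposal is correct and follows essentially the same route as the paper: approximate the log-determinant in \eqref{eq:capacity} by its trace at low SNR, use the diagonal structure of ${\bf R}_{\eta\eta}$ together with the identity $\alpha_i^2+\alpha_i\beta_i=\alpha_i$ to evaluate the trace entrywise, and repackage the result as $\log_2(1+\sum_i x_i)$. The only cosmetic difference is that the paper applies $\log_2|{\bf I}+{\bf M}|\approx\log_2(1+{\rm tr}({\bf M}))$ in one step, whereas you pass through the linearization $\tfrac{1}{\ln 2}{\rm tr}({\bf M})$ and back, which is equivalent at the order retained.
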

\begin{proof}
In the low SNR regime, the capacity \eqref{eq:capacity} can be approximated as

\vspace{-1 em}
\small
\begin{align}
	\nonumber
	&C\big({\bf b}, {\bf H}_{\rm b}\big)  \approx \log_2\bigg(1+ p_u {\rm tr}\bigg( {\bf R}_{\eta\eta}^{-1}{\bf D}_{\alpha}{\bf H}_{\rm b}{\bf H}_{\rm b}^H{\bf D}_{\alpha}^H \bigg)\bigg)
	\\ \nonumber
	& = \log_2\bigg(1+p_u {\rm tr}\bigg(\Big[{\bf I}_{N_{\rm RF}} + p_u{\bf D}_{\beta}{\rm diag}\big({\bf H}_{\rm b}{\bf H}_{\rm b}^H\big)\Big]^{-1}{\bf D}_{\alpha}{\bf H}_{\rm b}{\bf H}_{\rm b}^H\bigg)\bigg)
	\\ \nonumber
	& = \log_2\bigg(1 + p_u\sum_{i = 1}^{N_{\rm RF}}\Big(1+p_u\beta_i\|[{\bf H}_{\rm b}]_{i,:}\|^2\Big)^{-1}\alpha_i\|{\bf H}_{\rm b}]_{i,:}\|^2\bigg).
\end{align}
\normalsize
This completes the proof for Lemma \ref{lem:capacity_lowSNR}.
\end{proof}

\begin{figure*}[!t]
 		\begin{align}
			\label{eq:capacity_lowSNR_BA}
			&\tilde C^{RBA}_{low}\big({\bf H}_{\rm b} \big) =\log_2\left(1+\sum_{i = 1}^{N_{\rm RF}} \frac{p_u\bigg(1-\pi\sqrt{3}\,2^{-(2{\bar b}+1)}\Big(N_{\rm RF}^{-2}\|[{\bf H}_{\rm b}]_{i,:}\|^{-\frac{4}{3}} \Big\{\sum_{j = 1}^{N_{\rm RF}}\|[{\bf H}_{\rm b}]_{j,:}\|^{\frac{2}{3}}\Big\}^2\Big)_{2^{2\bar b}}^-\bigg) \big\|[{\bf H}_{\rm b}]_{i,:}\big\|^2}{1+p_u\pi\sqrt{3}\,2^{-(2{\bar b}+1)}\Big(N_{\rm RF}^{-2}\|[{\bf H}_{\rm b}]_{i,:}\|^{-\frac{4}{3}}\Big\{ \sum_{j = 1}^{N_{\rm RF}}\|[{\bf H}_{\rm b}]_{j,:}\|^{\frac{2}{3}}\Big\}^2\Big)_{2^{2\bar b}}^- \,\big\|[{\bf H}_{\rm b}]_{i,:}\big\|^2}\right).  
		\end{align} 
		\vspace{-1 em}
		\\	\rule{\textwidth}{0.7pt}
\end{figure*}

Lemma \ref{lem:capacity_lowSNR} gives the same intuition as the BA solutions \eqref{eq:opt_BA}, \eqref{eq:opt_BA_rev} that to maximize the capacity, we need to assign more bits to the RF chain with larger channel gains in the low SNR regime. 
We further derive an approximation of the capacity with the proposed BA algorithms by applying a BA solution to \eqref{eq:capacity_lowSNR}.
In particular, we consider the case in which the revMMSQE-BA algorithm is applied to the resolution-adaptive ADC architecture since it is more effective in the low SNR regime.

\begin{proposition}
	\label{pr:capacity_lowSNR_BA}
	For the low SNR, the capacity under the proposed resolution-adaptive ADC architecture with the revMMSQE-BA algorithm, $C^{RBA}_{low}\big({\bf b},{\bf H}_{\rm b}\big)$, can be approximated as \eqref{eq:capacity_lowSNR_BA}.
\end{proposition}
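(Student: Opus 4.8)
The plan is to derive \eqref{eq:capacity_lowSNR_BA} by direct substitution into the low-SNR capacity expression of Lemma~\ref{lem:capacity_lowSNR}. First I would recall that for the non-linear MMSE quantizer the per-RF-chain gain satisfies $\alpha_i = 1-\beta_i$ with $\beta_i \approx \tfrac{\pi\sqrt{3}}{2}2^{-2b_i}$, so that every $\alpha_i$ appearing in \eqref{eq:capacity_lowSNR} is a monotone function of $b_i$ alone. It then suffices to evaluate $2^{-2\hat b_i^{rev}}$ from the closed-form revMMSQE-BA solution \eqref{eq:opt_BA_rev}: since $\hat b_i^{rev} = \bar b + \log_2\!\big(N_{\rm RF}\|[{\bf H}_{\rm b}]_{i,:}\|^{2/3}/\sum_{j}\|[{\bf H}_{\rm b}]_{j,:}\|^{2/3}\big)$, exponentiating gives $2^{-2\hat b_i^{rev}} = 2^{-2\bar b}\,N_{\rm RF}^{-2}\|[{\bf H}_{\rm b}]_{i,:}\|^{-4/3}\big\{\sum_{j}\|[{\bf H}_{\rm b}]_{j,:}\|^{2/3}\big\}^{2}$, whence $1-\alpha_i = \beta_i \approx \pi\sqrt{3}\,2^{-(2\bar b+1)}\,N_{\rm RF}^{-2}\|[{\bf H}_{\rm b}]_{i,:}\|^{-4/3}\big\{\sum_{j}\|[{\bf H}_{\rm b}]_{j,:}\|^{2/3}\big\}^{2}$.

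Next I would account for the fact that the relaxed real-valued solution must be mapped to a feasible allocation: whenever $\hat b_i^{rev}\le 0$ the $i$th ADC is deactivated, i.e., $b_i$ is clipped at $0$. A short computation shows that $\hat b_i^{rev}\le 0$ is equivalent to $N_{\rm RF}^{-2}\|[{\bf H}_{\rm b}]_{i,:}\|^{-4/3}\{\sum_{j}\|[{\bf H}_{\rm b}]_{j,:}\|^{2/3}\}^{2} \ge 2^{2\bar b}$, so the deactivation/feasibility step is exactly encoded by replacing that bracketed quantity with its clipped version $(\,\cdot\,)^{-}_{2^{2\bar b}}=\min(\,\cdot\,,2^{2\bar b})$ (which forces $2^{-2b_i}\le 1$, i.e.\ $b_i\ge 0$). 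Substituting the resulting $\alpha_i$ and $1-\alpha_i$ back into each summand $\tfrac{p_u\alpha_i\|[{\bf H}_{\rm b}]_{i,:}\|^{2}}{1+p_u(1-\alpha_i)\|[{\bf H}_{\rm b}]_{i,:}\|^{2}}$ of \eqref{eq:capacity_lowSNR} and collecting terms yields precisely the right-hand side of \eqref{eq:capacity_lowSNR_BA}, giving $C^{RBA}_{low}\approx \tilde C^{RBA}_{low}$.

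The main obstacle, and the only place needing genuine care, is this mapping step: both the capacity formula \eqref{eq:capacity_lowSNR} and the BA solution \eqref{eq:opt_BA_rev} come from the real relaxation of the underlying integer program, and one must argue that keeping the relaxed $\hat b_i^{rev}$ for the strictly positive entries, while only clipping at zero, still constitutes a faithful approximation of $C^{RBA}_{low}$ rather than re-running the ceiling-and-tradeoff re-mapping of Algorithm~\ref{Power}. I would justify this by noting that (i) the rounding of the positive entries perturbs each $2^{-2b_i}$ only by a bounded factor, and (ii) the relation $\beta_i\approx\tfrac{\pi\sqrt3}{2}2^{-2b_i}$ is itself an approximation valid for $b_i>5$ (cf.\ the footnote after \eqref{eq:msqe}), so the clipped-relaxed expression is the natural closed-form surrogate; together with the low-SNR linearization already granted by Lemma~\ref{lem:capacity_lowSNR} (dropping the off-diagonal terms of ${\bf H}_{\rm b}{\bf H}_{\rm b}^{H}$ and $\log_2|{\bf I}+{\bf M}|\approx\log_2(1+{\rm tr}\,{\bf M})$), all remaining steps are the routine substitution and algebraic simplification outlined above.
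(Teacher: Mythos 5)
Your proposal is correct and follows essentially the same route as the paper: clip the relaxed revMMSQE-BA solution at zero via $(\hat b^{rev}_i)^+$ (which is exactly the $\min(\cdot,2^{2\bar b})$ operation in \eqref{eq:capacity_lowSNR_BA}), substitute $\beta_i=\frac{\pi\sqrt{3}}{2}2^{-2(\hat b^{rev}_i)^+}$ into Lemma~\ref{lem:capacity_lowSNR}, and simplify. The only detail the paper makes explicit that you leave implicit is that for deactivated ADCs the surrogate $\alpha_i=1-\frac{\pi\sqrt{3}}{2}$ goes negative, so a $(\cdot)^+$ on $\alpha_i$ must first be imposed and then dropped on the grounds that the corresponding $p_u\|[{\bf H}_{\rm b}]_{i,:}\|^2$ terms are small; conversely, you are more careful than the paper about justifying the use of the relaxed real-valued allocation in place of the integer-mapped one.
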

\begin{proof}
	Forcing non-negativity to the revMMSQE-BA solution \eqref{eq:opt_BA_rev} as 
	$b_i = (\hat b^{rev}_i)^+$ where $(a)^+ = \max(a,0)$, we apply $b_i = (\hat b^{rev}_i)^+$ to \eqref{eq:capacity_lowSNR}.
	Then, the capacity $C^{RBA}_{low}\big({\bf b},{\bf H}_{\rm b}\big)$ can be approximated as
	
	\vspace{-1em}
	\small
	\begin{align}
		\nonumber
		& C^{RBA}_{low}\big({\bf b},{\bf H}_{\rm b}\big) \approx C_{low}\big((\hat{\bf b}^{rev})^+, {\bf H}_{\rm b}\big)
		\\ \nonumber
		& \stackrel{(a)} \approx \log_2\Bigg(1+\sum_{i=1}^{N_{\rm RF}} \frac{p_u\Big(1-\frac{\pi\sqrt{3}}{2}2^{ -2(\hat b^{rev}_i)^+}\Big)^+\|[{\bf H}_{\rm b}]_{i,:}\|^2}{1+p_u\frac{\pi\sqrt{3}}{2}2^{ -2(\hat b^{rev}_i)^+}\|[{\bf H}_{\rm b}]_{i,:}\|^2}\Bigg)
		\\ \label{eq:capacity_lowSNR_BA_pf}
		& \stackrel{(b)}{\approx} \log_2\Bigg(1+\sum_{i=1}^{N_{\rm RF}} \frac{p_u\Big(1-\frac{\pi\sqrt{3}}{2}2^{ -2(\hat b^{rev}_i)^+}\Big)\|[{\bf H}_{\rm b}]_{i,:}\|^2}{1+p_u\frac{\pi\sqrt{3}}{2}2^{ -2(\hat b^{rev}_i)^+}\|[{\bf H}_{\rm b}]_{i,:}\|^2}\Bigg)
	\end{align}
	\normalsize
	where (a) is from the approximation of $\alpha_i$ and (b) comes from removing the non-negativity condition of $\alpha_i$. 
	Since $p_u$ and $\|[{\bf H}_{\rm b}]_{i,:}\|$, which corresponds to $\alpha_i < 0 $ are small, the error from the approximation (b) can be negligible.
	Rearranging \eqref{eq:capacity_lowSNR_BA_pf}, we derive \eqref{eq:capacity_lowSNR_BA}.
\end{proof}
Since the revMMSQE-BA solution $\hat {\bf b}^{rev}$ is the function of ${\bf H}_{\rm b}$, $\tilde C^{RBA}_{low}$ in \eqref{eq:capacity_lowSNR_BA} is only a function of channels and captures the capacity that the proposed flexible ADC architecture can achieve adaptively for a given channel by using the revMMSQE-BA algorithm.
}

Now, regarding the implementation issue of the algorithm, we remark the following ADC resolution switching scenarios.

\begin{remark}
Resolution switching at every channel coherence time allows the proposed architecture to adapt to different channel fading realizations, implying that it is the best switching scenario.
Such coherence time switching in mmWave channels, however, may not be feasible due to the very short coherence time of mmWave channels \cite{khan2012millimeter}.
Consequently, the switching period needs to be the multiples of the coherence time. 
In this case, switching at the time-scale of slowly changing channel characteristics 
marginally degrades the performance of the algorithms.
Then, the worst-case scenario is not to exploit the flexibility of ADC resolutions, which is equivalent to have a infinitely long switching period, and converges to the fixed-ADC system with analog beamforming.
\end{remark}

In the next section, using a practical receiver, e.g., MRC, we analyze the worst-case scenario in terms of an \emph{ergodic achievable rate} due to the intractability of the analysis with the BA solutions. 
The derived ergodic rate of the proposed system for the worst-case scenario offers the insight of the system performance as a function of the system parameters.

\section{Worst-Case Analysis}
\label{sec:main}


We derive the ergodic achievable rate of user $n$ for the hybrid beamforming architecture with fixed-ADCs over mmWave channels.
The number of quantization bits in \eqref{eq:AQNM2} is considered to be the same, i.e., $b_i = b$, and thus, $\alpha_i = \alpha$ for $i = 1,\cdots, N_{\rm RF}$.
Using MRC, the quantized signal vector is
\begin{align}
\nonumber
{\bf y}_{\rm q}^{\rm mrc} =  {\bf H}_{\rm b}^H{\bf y}_{\rm q}
 = \alpha \sqrt{p_u}{\bf H}_{\rm b}^H{\bf H_{\rm b}}{\bf s} + \alpha {\bf H}_{\rm b}^H {\bf n} +  {\bf H}_{\rm b}^H{\bf n}_{\rm q}
\end{align}
and the ${n}$th element of ${\bf y}_{\rm q}^{\rm mrc} $ for the user $n$ is expressed as
\begin{align}
\nonumber
y_{{\rm q},n}^{\rm mrc} =& \alpha  \sqrt{p_u}{\bf h}_{{\rm b},n}^H {\bf h}_{{\rm b},n} s_{n}
 \\ \label{eq:MRC}
&+ 
\alpha  \sqrt{p_u}\sum_{\substack{k = 1\\ k \neq n}}^{N_u}{\bf h}_{{\rm b},n}^H {\bf h}_{{\rm b},k} s_k + 
\alpha {\bf h}_{{\rm b},n}^H   {\bf n} + {\bf h}_{{\rm b},n}^H  {\bf n}_{\rm q}.
\end{align}
Since ${\bf h}_{{\rm b},n} = \sqrt{\gamma_n}{\bf g}_n$ from \eqref{eq:RFchannel}, the desired signal power in \eqref{eq:MRC} becomes $p_u\alpha^2 \gamma_n^2 \|{\bf g}_{n}\|^4$ and the noise-plus-interference power is given by
\begin{align}
	\nonumber
	\Psi_{\bf G} =& p_u\alpha^2 \gamma_n\sum\limits_{\substack{k = 1\\ k \neq n}}^{N_u}\gamma_k|{\bf g}_{n}^H {\bf g}_k|^2 + \alpha^2 \gamma_n \|{\bf g}_{n}\|^2 
	\\ \nonumber
	&+\alpha (1-\alpha)\gamma_n {\bf g}_{n}^H {\rm diag}(p_u{\bf G} {\bf D}_\gamma {\bf G}^H + {\bf I}_{N_{\rm RF}}) {\bf g}_{n}.
\end{align}
Simplifying the ratio of the two power terms, the achievable rate of the $n$th user is expressed as
\begin{align}
\label{eq:AchievableRate}
R_{n} = \mathbb{E} \Bigg[ \log_2\left(1+ \frac{ p_u\alpha \gamma_n \|{\bf g}_{n}\|^4}{\tilde\Psi_{\bf G}} \right)\Bigg]
\end{align}
where
\begin{align}
\nonumber
\tilde \Psi_{\bf G} = &p_u\alpha \sum\limits_{\substack{k = 1\\ k \neq n}}^{N_u}\gamma_k\big|{\bf g}_{n}^H {\bf g}_k\big|^2 + \alpha \|{\bf g}_{n}\|^2 
\\ \nonumber
&+  (1-\alpha) {\bf g}_{n}^H {\rm diag}\big(p_u{\bf G} {\bf D}_\gamma {\bf G}^H + {\bf I}_{N_{\rm RF}}\big) {\bf g}_{n}.
\end{align}
Considering large antenna arrays at the receiver, we use Lemma \ref{lem1} to characterize the achievable rate \eqref{eq:AchievableRate}.

\begin{lemma}
\label{lem1}
Considering large antenna arrays at the BS, the uplink ergodic achievable rate \eqref{eq:AchievableRate} for the user $n$ can be approximated as
\begin{align}
\label{eq:rate_apprx}
\tilde R_{n} = \log_2\left(1+ \frac{p_u\alpha \gamma_n \mathbb{E}\big[\|{\bf g}_{n}\|^4\big]}{\mathbb{E}\big[\tilde \Psi_{\bf G}\big]} \right)
\end{align}
where 
\begin{align}
\nonumber
\mathbb{E}\big[\tilde \Psi_{\bf G}\big]& = \mathbb{E}\Bigg[p_u\alpha \sum\limits_{\substack{k = 1\\ k \neq n}}^{N_u}\gamma_k\big|{\bf g}_{n}^H {\bf g}_k\big|^2 + \alpha \|{\bf g}_{n}\|^2 
\\ \label{eq:E_psi}
&+  (1-\alpha) {\bf g}_{n}^H {\rm diag}\big(p_u{\bf G} {\bf D}_\gamma {\bf G}^H + {\bf I}_{N_{\rm RF}}\big) {\bf g}_{n}\Bigg].
\end{align}
\end{lemma}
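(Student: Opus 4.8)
The plan is to recognize that the rate in \eqref{eq:AchievableRate} has the generic form $R_n = \mathbb{E}\big[\log_2(1 + X/Y)\big]$ with non-negative $X = p_u\alpha\gamma_n\|\mathbf{g}_n\|^4$ and $Y = \tilde\Psi_{\mathbf{G}}$, and that $\tilde R_n$ in \eqref{eq:rate_apprx} is obtained from $R_n$ by (i) moving the expectation inside the logarithm and (ii) replacing $\mathbb{E}[X/Y]$ by $\mathbb{E}[X]/\mathbb{E}[Y]$, noting that the deterministic factor $p_u\alpha\gamma_n$ pulls out of both expectations. So the lemma reduces to justifying the standard massive-MIMO approximation $\mathbb{E}[\log_2(1+X/Y)] \approx \log_2\big(1 + \mathbb{E}[X]/\mathbb{E}[Y]\big)$ for non-negative $X,Y$; this is the same device used in \cite{fan2015uplink,zhang2016spectral}, and I would invoke it by reference rather than re-derive it.

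To see why it holds in the large-array regime, I would appeal to channel hardening: writing $N = N_{\rm RF}$, the normalized quantities $\tfrac{1}{N}X$ and $\tfrac{1}{N}Y$ are built from sums over the $N_{\rm RF}$ RF chains and the $N_u$ users, so they concentrate and converge, as the array grows, to deterministic limits $\bar x$ and $\bar y$ with $\bar y>0$ (the term $\alpha\|\mathbf{g}_n\|^2$ together with the $(1-\alpha)\|\mathbf{g}_n\|^2$ piece of $\mathbf{g}_n^H\mathrm{diag}(\cdot)\mathbf{g}_n$ keeps $Y \ge \|\mathbf{g}_n\|^2 > 0$). By the continuous mapping theorem $\log_2(1+X/Y)\to\log_2(1+\bar x/\bar y)$ in probability, and, since the integrand is uniformly integrable, $R_n \to \log_2(1+\bar x/\bar y)$; simultaneously $\mathbb{E}[X]/\mathbb{E}[Y]\to\bar x/\bar y$, so $\tilde R_n\to\log_2(1+\bar x/\bar y)$ as well. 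Hence $R_n$ and $\tilde R_n$ agree asymptotically, which is exactly the claimed approximation.

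The main obstacle is that this is genuinely an approximation rather than an identity for finite arrays: Jensen's inequality only gives $R_n \le \log_2(1+\mathbb{E}[X/Y])$, and bridging $\mathbb{E}[X/Y]$ to $\mathbb{E}[X]/\mathbb{E}[Y]$ requires $Y=\tilde\Psi_{\mathbf{G}}$ to fluctuate little relative to its mean and to be weakly correlated with $X$. Since the effective per-user path count $L_n$ is Poisson with mean $\lambda_{\rm p}=O(1)$, the desired-signal term $\|\mathbf{g}_n\|^4$ does not itself harden, so the concentration really comes from aggregating over RF chains and interferers; I would therefore phrase the result carefully as an approximation whose accuracy improves with $N_{\rm RF}$ (and $N_u$) and is corroborated numerically later. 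Once Lemma~\ref{lem1} is established, the remaining work is purely the moment evaluations $\mathbb{E}[\|\mathbf{g}_n\|^4]$ and $\mathbb{E}[\tilde\Psi_{\mathbf{G}}]$ in \eqref{eq:E_psi}, which reduce to second- and fourth-order moments of the i.i.d.\ Gaussian path gains combined with the Poisson law of $L_k$, and are handled in the subsequent lemmas.
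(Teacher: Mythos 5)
Your proposal is correct and matches the paper's approach: the paper's entire proof is a one-line invocation of Lemma~1 in \cite{zhang2014power}, which is exactly the standard approximation $\mathbb{E}[\log_2(1+X/Y)] \approx \log_2(1+\mathbb{E}[X]/\mathbb{E}[Y])$ you identify and propose to cite. Your additional channel-hardening justification and the caveat about $\|\mathbf{g}_n\|^4$ not hardening for fixed $\lambda_{\rm p}$ go beyond what the paper writes but are consistent with it (the paper likewise remarks only that the approximation improves with the number of BS antennas).
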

\begin{proof}
We apply Lemma 1 in \cite{zhang2014power} to \eqref{eq:AchievableRate}.
\end{proof}

According to Lemma \ref{lem1} in \cite{zhang2014power}, the approximation in \eqref{eq:rate_apprx} becomes more accurate as the number of the BS antennas increases. 
Thus, this approximation will be particularly accurate in systems with the large number of antennas. 
%
%
%
Using Lemma \ref{lem1}, we derive the closed-form approximation of \eqref{eq:AchievableRate} as a function of system parameters: the transmit power,
the number of BS antennas, RF chains, users and quantization bits, and the near average number of propagation paths. 
\begin{theorem}
\label{thm:SE_uniform} 
The uplink ergodic achievable rate of the user $n$ in the considered system with fixed ADCs is derived in a closed-form approximation as
\begin{align}
\label{eq:SE_uniform} 
\tilde R_{n} = \log_2\left( 1+\frac{p_u \gamma_n \alpha \big(\lambda_{\rm p}^2  + 2\lambda_{\rm p} + 2e^{-\lambda_{\rm p}}\big)}{\eta}\right)
\end{align}
where 
\footnotesize
\begin{align}
\nonumber
\eta =& \Big(\lambda_{\rm p}+ e^{-\lambda_{\rm p}}\Big)
 \Bigg(1 + 2p_u\gamma_n (1-\alpha) + \big(\lambda_{\rm p}+ e^{-\lambda_{\rm p}}\big) \frac{p_u}{N_{\rm RF}} \sum_{\substack{k=1\\k\neq n}}^{N_u}\gamma_k  \Bigg).
\end{align}
\end{theorem}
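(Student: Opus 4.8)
The plan is to invoke Lemma~\ref{lem1}, which already reduces the ergodic rate to the deterministic-equivalent form \eqref{eq:rate_apprx}, and then to compute the two expectations $\mathbb{E}[\|{\bf g}_n\|^4]$ (the numerator) and $\mathbb{E}[\tilde\Psi_{\bf G}]$ in \eqref{eq:E_psi}, matching them term-by-term against the numerator and the denominator $\eta$ of \eqref{eq:SE_uniform}. The structural fact driving everything is that, by \eqref{eq:channel_geo}--\eqref{eq:RFchannel} together with the assumption that ${\bf A}_{\rm RF}$ captures all propagation paths, the $n$th column ${\bf g}_n$ of ${\bf G}$ has exactly $L_n$ nonzero entries, supported on some $\mathcal{S}_n\subseteq\{1,\dots,N_{\rm RF}\}$, with those entries IID $\mathcal{CN}(0,1)$ and $L_n\sim\max\{\mathrm{Poisson}(\lambda_{\rm p}),1\}$ independent across users.

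First I would record the elementary moments. Conditioned on $L_n$, $\|{\bf g}_n\|^2$ is a sum of $L_n$ IID unit-mean exponentials, hence Gamma with mean and variance both $L_n$, so $\mathbb{E}[\|{\bf g}_n\|^2\mid L_n]=L_n$ and $\mathbb{E}[\|{\bf g}_n\|^4\mid L_n]=L_n^2+L_n$; also $\mathbb{E}[|g_{i,n}|^4]=2$ for a $\mathcal{CN}(0,1)$ entry. For $P\sim\mathrm{Poisson}(\lambda_{\rm p})$ and $L_n=\max\{P,1\}$ one gets $\mathbb{E}[L_n]=\lambda_{\rm p}+e^{-\lambda_{\rm p}}$ and $\mathbb{E}[L_n^2]=\lambda_{\rm p}^2+\lambda_{\rm p}+e^{-\lambda_{\rm p}}$, hence $\mathbb{E}[\|{\bf g}_n\|^4]=\mathbb{E}[L_n^2+L_n]=\lambda_{\rm p}^2+2\lambda_{\rm p}+2e^{-\lambda_{\rm p}}$, exactly the bracketed factor in the numerator of \eqref{eq:SE_uniform}.

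Then I would split $\mathbb{E}[\tilde\Psi_{\bf G}]$ into three pieces. (a) The two terms proportional to $\|{\bf g}_n\|^2$, namely $\alpha\|{\bf g}_n\|^2$ and the identity part of the last term, add to $\|{\bf g}_n\|^2$ and contribute $\mathbb{E}[\|{\bf g}_n\|^2]=\lambda_{\rm p}+e^{-\lambda_{\rm p}}$. (b) The self term of the diagonal quadratic form contributes $(1-\alpha)p_u\gamma_n\,\mathbb{E}\!\left[\sum_{i\in\mathcal{S}_n}|g_{i,n}|^4\right]=(1-\alpha)p_u\gamma_n\cdot 2\,\mathbb{E}[L_n]=2p_u\gamma_n(1-\alpha)(\lambda_{\rm p}+e^{-\lambda_{\rm p}})$. (c) For the cross-user terms, expanding $|{\bf g}_n^H{\bf g}_k|^2$ and $\sum_i|g_{i,n}|^2|g_{i,k}|^2$ and using independence and zero mean kills all off-diagonal ($i\neq j$) products, so each equals $|\mathcal{S}_n\cap\mathcal{S}_k|$ in expectation given the supports; invoking that the beamspace supports are (approximately) uniform and independent over the $N_{\rm RF}$ active beam indices gives $\mathbb{E}[|\mathcal{S}_n\cap\mathcal{S}_k|\mid L_n,L_k]=L_nL_k/N_{\rm RF}$, so $\mathbb{E}[|{\bf g}_n^H{\bf g}_k|^2]=(\lambda_{\rm p}+e^{-\lambda_{\rm p}})^2/N_{\rm RF}$. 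The $\alpha$-weighted interference term and the $(1-\alpha)$-weighted off-diagonal quantization term then add to $p_u(\lambda_{\rm p}+e^{-\lambda_{\rm p}})^2N_{\rm RF}^{-1}\sum_{k\neq n}\gamma_k$, with the $\alpha$ cancelling. Summing (a)--(c) and pulling out $(\lambda_{\rm p}+e^{-\lambda_{\rm p}})$ gives $\mathbb{E}[\tilde\Psi_{\bf G}]=\eta$, and substituting into \eqref{eq:rate_apprx} yields \eqref{eq:SE_uniform}.

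The main obstacle is piece (c): making precise the ``random, independent beam supports of size $L_k$'' model implicit in the geometric channel with uniformly distributed AoAs and the assumption that ${\bf A}_{\rm RF}$ absorbs all paths, so that $\mathbb{E}[|\mathcal{S}_n\cap\mathcal{S}_k|\mid L_n,L_k]=L_nL_k/N_{\rm RF}$ is legitimate; everything else is bookkeeping of Gaussian and Poisson moments. A secondary point worth stating explicitly is why the within-vector cross terms in the quadratic expansions vanish, since ${\bf g}_n$ and ${\bf g}_k$ are independent only across $k\neq n$ and one still has to check the $i\neq j$ terms separately.
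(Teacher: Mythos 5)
Your proposal is correct and follows essentially the same route as the paper's proof in Appendix C: invoke Lemma~\ref{lem1}, compute $\mathbb{E}[\|{\bf g}_n\|^4]$ via the Gamma/Poisson moments of $L_n$, and evaluate $\mathbb{E}[\tilde\Psi_{\bf G}]$ term by term using the sparse-support model $g_{i,n}=\mathds{1}_{\{i\in\mathcal{P}_n\}}\xi_{i,n}$ with $\mathbb{E}[\mathds{1}_{\{i\in\mathcal{P}_n\}}]=\mathbb{E}[L_n]/N_{\rm RF}$ and independent supports across users, which is exactly the uniform-intersection fact you flag as the main obstacle. All the moment computations, the vanishing of the off-diagonal cross terms, and the cancellation of $\alpha$ between the interference and quantization-noise contributions match the paper's derivation.
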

\normalsize
\begin{proof} 
See Appendix \ref{appx:SE_uniform}
\end{proof}
Note that since the obtained ergodic rate in Theorem \ref{thm:SE_uniform} is from the worst-case scenario, it can serve as the lower bound of the proposed architecture.
This further implies that the proposed system can achieve higher ergodic rate than the derived rate by leveraging the flexibility of ADC resolutions.
In addition, the derived ergodic rate explains general tradeoffs of the proposed system thanks to its tractability as a function of the system parameters.
In contrast to the prior work \cite{mo2016achievable} which assumes the quasi-static setting, the achievable rate in Theorem \ref{thm:SE_uniform} considers mmWave fading channels in the ergodic sense.
Accordingly, the derived ergodic rate measures the achievable rates by adopting the rate to the different fading realizations and thus offers more realistic evaluation than the quasi-static analysis in contemporary wireless systems. 

We derive Corollary \ref{col:path_large} for simplifying the ergodic rate in \eqref{eq:SE_uniform} when the near average number of propagation paths $\lambda_p$ is moderate or large, and further provide remarks on the derived rate in behalf of profound understanding.

\begin{corollary}
\label{col:path_large}
When the near average number of propagation paths $\lambda_{\rm p}$ is moderate or large, \eqref{eq:SE_uniform} can be approximated as
\begin{align}
\label{eq:path_large}
\tilde R_n^{\dagger} = \log_2\left(1 + \frac{p_u \gamma_n \alpha (\lambda_{\rm p} + 2) }{1 + p_u\Big(2\gamma_n(1-\alpha) + \frac{\lambda_{\rm p}}{N_{\rm RF}} \sum_{\substack{k=1\\k\neq n}}^{N_u}\gamma_k\Big) }\right).
\end{align}
\begin{proof}
When $\lambda_{\rm p}$ is moderate or large enough, we can approximate $\lambda_{\rm p} + e^{-\lambda_{\rm p}} \approx \lambda_{\rm p}$. 
Hence, we have the approximation \eqref{eq:path_large} by replacing $\lambda_{\rm p} + e^{-\lambda_{\rm p}}$ with $\lambda_{\rm p}$ in \eqref{eq:SE_uniform}.
\end{proof}
\end{corollary}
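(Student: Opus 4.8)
The plan is to derive Corollary~\ref{col:path_large} directly from Theorem~\ref{thm:SE_uniform}: I take \eqref{eq:SE_uniform} and the accompanying expression for $\eta$ as given, and pass to the regime where the Poisson truncation correction is negligible. The key observation is that $L_k\sim\max\{Poisson(\lambda_{\rm p}),1\}$ differs from a plain $Poisson(\lambda_{\rm p})$ only on the event $\{L_k=0\}$, whose probability is $e^{-\lambda_{\rm p}}$; this is precisely why the entire discrepancy between \eqref{eq:SE_uniform} and \eqref{eq:path_large} is carried by the single quantity $e^{-\lambda_{\rm p}}$. So the whole content of the corollary is the elementary fact $\lambda_{\rm p}+e^{-\lambda_{\rm p}}\approx\lambda_{\rm p}$ (equivalently $e^{-\lambda_{\rm p}}/\lambda_{\rm p}\to 0$ as $\lambda_{\rm p}$ grows), propagated through \eqref{eq:SE_uniform}.

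Mechanically I would proceed in three short steps. First, in the denominator replace both occurrences of $\lambda_{\rm p}+e^{-\lambda_{\rm p}}$ in $\eta$ by $\lambda_{\rm p}$, which yields $\eta\approx\lambda_{\rm p}\big(1+2p_u\gamma_n(1-\alpha)+\lambda_{\rm p}\tfrac{p_u}{N_{\rm RF}}\sum_{k\neq n}\gamma_k\big)$. Second, in the numerator of the SINR term discard $2e^{-\lambda_{\rm p}}$ against $\lambda_{\rm p}^2+2\lambda_{\rm p}$ and write $\lambda_{\rm p}^2+2\lambda_{\rm p}=\lambda_{\rm p}(\lambda_{\rm p}+2)$. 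Third, cancel the common factor $\lambda_{\rm p}$ between the approximated numerator $p_u\gamma_n\alpha\,\lambda_{\rm p}(\lambda_{\rm p}+2)$ and the approximated denominator; what survives inside the logarithm is exactly $1+\dfrac{p_u\gamma_n\alpha(\lambda_{\rm p}+2)}{1+p_u\big(2\gamma_n(1-\alpha)+\tfrac{\lambda_{\rm p}}{N_{\rm RF}}\sum_{k\neq n}\gamma_k\big)}$, which is \eqref{eq:path_large}.

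There is no genuine obstacle here; the statement is a one-line asymptotic simplification, so the only thing worth being explicit about is why the approximation is legitimate term by term. The relative errors $e^{-\lambda_{\rm p}}/\lambda_{\rm p}$ (in $\eta$) and $2e^{-\lambda_{\rm p}}/(\lambda_{\rm p}^2+2\lambda_{\rm p})$ (in the numerator) are both $O(e^{-\lambda_{\rm p}})$ and decay very quickly with $\lambda_{\rm p}$, so the induced relative error in their ratio, and hence in $\tilde R_n$, is also $O(e^{-\lambda_{\rm p}})$; this is what makes \eqref{eq:path_large} a sound approximation for moderate-to-large $\lambda_{\rm p}$. As an alternative that avoids leaning on Theorem~\ref{thm:SE_uniform}, one could instead re-run the moment computation behind \eqref{eq:SE_uniform}, via Lemma~\ref{lem1} and the expectations $\mathbb{E}\big[\|{\bf g}_n\|^4\big]$ and $\mathbb{E}\big[\tilde\Psi_{\bf G}\big]$, but with $L_k$ modeled as $Poisson(\lambda_{\rm p})$ from the outset; then $\mathbb{E}[L_k]=\lambda_{\rm p}$ and $\mathbb{E}[L_k^2+L_k]=\lambda_{\rm p}^2+2\lambda_{\rm p}$ directly, and the same algebra reproduces \eqref{eq:path_large} with no leftover exponential.
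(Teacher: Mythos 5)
Your proof is correct and is essentially the paper's own argument: substitute $\lambda_{\rm p}+e^{-\lambda_{\rm p}}\approx\lambda_{\rm p}$ into \eqref{eq:SE_uniform}, drop the $2e^{-\lambda_{\rm p}}$ in the numerator, and cancel the common factor $\lambda_{\rm p}$ to obtain \eqref{eq:path_large}. Your explicit $O(e^{-\lambda_{\rm p}})$ error bookkeeping and the alternative plain-Poisson rederivation are nice additions but do not change the route.
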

\begin{remark}
\label{remark:inf_b}
For fixed $\lambda_p$, \eqref{eq:SE_uniform} with infinite-resolution ADCs ($b \to \infty$) reduces to 
\begin{align}
\label{eq:infinite_b}
\tilde R_{n} \to \log_2\left(1+\frac{p_u\gamma_n(\lambda_{\rm p}^2 + 2\lambda_{\rm p} + 2e^{-\lambda_{\rm p}})}{1 + (\lambda_{\rm p} + e^{-\lambda_{p}})\frac{p_u}{N_{\rm RF}}\sum_{\substack{k=1\\k\neq n}}^{N_u}\gamma_k} \right).
\end{align}
It is clear from \eqref{eq:infinite_b} that the uplink rate can be improved by using more RF chains (larger $N_{\rm RF}$), which reduces the user interference. 
Let $N_{\rm RF} = \tau N_r$ where $ 0 <\tau < 1 $, then for the fixed $\lambda_{p}$, the full-resolution rate \eqref{eq:infinite_b} increases to
\begin{align}
\nonumber
\tilde R_{n} \to \log_2\Big(1+ p_u\gamma_n\big(\lambda_{\rm p}^2 + 2\lambda_{\rm p} + 2e^{-\lambda_{\rm p}}\big)  \Big),\  \text{as } N_r \to \infty.
\end{align}
\end{remark}

\begin{remark}
\label{remark:inf_pu}
When using MRC,
the uplink user rate transfers to the interference-limited regime from the noise-limited regime as $p_u$ increases.
Consequently, for fixed $\lambda_p$, \eqref{eq:SE_uniform} with the infinite transmit power ($p_u \to \infty$), converges to $\tilde R_{n} \to$

\small
\begin{align}
\label{eq:infinite_pu}
\log_2\left(1+\frac{\gamma_n \alpha (\lambda_{\rm p}^2  + 2\lambda_{\rm p} + 2e^{-\lambda_{\rm p}})}{\big(\lambda_{\rm p}+ e^{-\lambda_{\rm p}}\big)\bigg(2\gamma_n (1-\alpha) + \frac{(\lambda_{\rm p}+ e^{-\lambda_{\rm p}})}{N_{\rm RF}}  \sum_{\substack{k=1\\k\neq n}}^{N_u}\gamma_k \bigg)} \right). 
%
\end{align}
\normalsize
The interference power can be eliminated by using an infinite number of antennas with $N_{\rm RF} = \tau N_r$ where $0<\tau<1$. 
In this case, \eqref{eq:infinite_pu} approaches to 
\begin{align}
\label{eq:infinite_pu_Nr}
\tilde R_{n} \to \log_2\left(1+\frac{\alpha (\lambda_{\rm p}^2  + 2\lambda_{\rm p} + 2e^{-\lambda_{\rm p}})}{2 (1-\alpha) \big(\lambda_{\rm p}+ e^{-\lambda_{\rm p}}\big) } \right),\  \text{as } N_r \to \infty.
\end{align}  
The result \eqref{eq:infinite_pu_Nr} shows that even the infinite transmit power ($p_u \to \infty$) and the infinite number of BS antennas ($N_r \to \infty$) cannot fully compensate for the degradation caused by the quantization distortion when mmWave channels have a fixed number of propagation paths independent to $N_r$.
\end{remark}

We now consider that $\lambda_{\rm p}$ is an increasing function of $N_r$ \cite{raghavan2011sublinear} since larger antenna arrays with a fixed antenna spacing capture more physical paths due to larger array aperture. 
Then, Corollary \ref{col:path_large} holds in the large antenna array regime. 

\begin{remark}
\label{remark:3}
Without loss of generality, we assume $\lambda_{\rm p} = \epsilon N_r$ where $0 < \epsilon < 1$. 
Considering large antenna arrays with $N_{\rm RF} = \tau N_r$, \eqref{eq:path_large} becomes 
\begin{align}
\label{eq:rm3}
\tilde R_n^{\dagger} = \log_2\left(1 + \frac{p_u \gamma_n \alpha (\epsilon N_r + 2) }{1 + p_u\Big(2\gamma_n(1-\alpha) + \epsilon/\tau \, \sum_{\substack{k=1\\k\neq n}}^{N_u}\gamma_k\Big) }\right).
\end{align}
The achievable rate \eqref{eq:rm3} increases to infinity as $N_r \to \infty$  for any quantization bits $b$, which is not the case for the fixed $\lambda_{\rm p}$ as previously shown in Remark \ref{remark:inf_b} and \ref{remark:inf_pu}.
With finite $N_r$, however, \eqref{eq:rm3} cannot increase to infinity but converges to
\begin{align}
\label{eq:rm3-2}
 \tilde R_n^{\dagger} \to \log_2\left(1 + \frac{\gamma_n \alpha (\epsilon N_r + 2) }{2\gamma_n(1-\alpha) + \frac{\epsilon}{\tau}\sum_{\substack{k=1\\k\neq n}}^{N_u}\gamma_k}\right), \  \text{as }  p_u \to \infty.
\end{align}
It is observed that the convergence in \eqref{eq:rm3-2} is from the limited number of propagation paths. 
\end{remark}

\begin{remark}
\label{remark:L_scale}
Assuming that the transmit power inversely scales with the number of RF chains that is proportional to the number of BS antennas, i.e., $p_u = E_s/N_{\rm RF} = E_s/(\tau N_r)$, the rate in \eqref{eq:path_large} with fixed $E_s$ and $\lambda_{\rm p} = \epsilon N_r$ reduces to
\begin{align}
\label{eq:power_scale}
\tilde{R}_{n}^\dagger \to \log_2 \big(1+E_s  \gamma_n \alpha \epsilon/\tau \big),\quad \text{as}\ N_r \to \infty.
\end{align}
Thus, \eqref{eq:power_scale} shows that we can scale down the user transmit power $p_u$ proportionally to $1/N_r$ maintaining a desirable rate.
In addition, \eqref{eq:power_scale} can be improved by using more quantization bits (larger $\alpha$). 
This result is similar to that of the uplink rate of low-resolution massive MIMO systems with Rayleigh channels \cite{fan2015uplink} but different in that \eqref{eq:power_scale} includes the factor of $\epsilon/\tau$ due to the analog beamforming and the sparse nature of mmWave channels.
\end{remark}
{\color{black} 
The proposed BA algorithms and rate analysis are for the single-cell assumption and can be derived similarly for a multi-cell assumption. 
Although changes in the algorithms and ergodic rate for the multi-cell scenario are minor and beyond the scope of this paper, the multi-cell analysis is included in Appendix \ref{appx:multi-cell} for completeness.} 
In the following section, we evaluate the performance of the proposed BA algorithms.
We also validate Theorem \ref{thm:SE_uniform} and Corollary \ref{col:path_large}, and confirm the observations made in this section.



\section{Simulation Results}
\label{sec:Num}


We consider single cell with a radius of $200$ $m$ and $N_u=8$ users distributed randomly over the cell. 
The minimum distance between the BS and users is $30$ $m$, i.e., $30 \leq d_n \leq 200$ for $n = 1,\cdots, N_u$ where $d_n$ [$m$] is the distance between the BS and user $n$.
Considering that the system operates at a $28$ GHz carrier frequency, we adopt the mmWave pathloss model in \cite{akdeniz2014millimeter} given as $PL(d_n)\text{ [dB]} = \alpha_{\rm pl} + \beta_{\rm pl} 10\log_{10}d_n + \chi$
where $\chi\sim\mathcal{N}(0,\sigma^2_{s})$ is the lognormal shadowing with $\sigma_s^2 = 8.7$ dB. 
The least square fits 
are $\alpha_{\rm pl}  = 72$ dB and $\beta_{\rm pl}  = 2.92$ dB \cite{akdeniz2014millimeter}.
Noise power is calculated as $P_{\rm noise}\text{ [dBm]} = -174 + 10\log_{10}W + n_f$
where $W$ and $n_f$ are the transmission bandwidth and noise figure at the BS, respectively.
We assume $W = 1$ GHz so as $f_s = 1$ GHz in \eqref{eq:ADCpower}, and $n_f=5$ dB.
Since we assume the normalized noise variance in our system model \eqref{eq:rx_signal}, the large-scale fading gain incorporating the normalization is
\begin{align}
\nonumber
 {\gamma}_{n,\text{dB}} \text{ [dB]} = -(PL(d_n)   + P_{\rm noise}).
\end{align}
We consider the near average number of propagation paths $ \lambda_{\rm p} = \epsilon N_r$ and the number of RF chains $N_{\rm RF} = \tau N_r$ with $ \epsilon = 0.1$ and $\tau = 0.5$.
We assume that the slowly changing characteristics of mmWave channels are consistent over $100 \times the\ channel\ coherence\ time$, i.e., large-scale fading gains $\gamma_n$ and the sparse structure of $\bf G$ in \eqref{eq:RFchannel} are fixed over $100$ channel realizations but the complex gains in $\bf G$ change at every channel realization.
This simulation environment holds for the rest of this paper unless mentioned otherwise.

We evaluate the proposed algorithms in terms of the capacity \eqref{eq:capacity}, uplink sum rate with MRC, and energy efficiency.
The uplink sum rate is defined as $ R = \sum_{n = 1}^{N_u} R_n$.
{\color{black} 
The ergodic rate of the $n$th user $R_n$ is computed as follows. 
Applying MRC ${\bf D}_\alpha {\bf H}_{\rm b}$ to the quantized signal vector ${\bf y}_{\rm q}$ in \eqref{eq:AQNM2}, the ergodic rate of user $n$ with ADC bit allocation ${\bf b}$ is given as
\begin{align}
	\label{eq:AchievableRate_BA}
	R_{n}({\bf b}) = \mathbb{E} \Bigg[ \log_2\bigg(1+\frac{p_u\gamma_n|\pmb \alpha^H{\bf v}_{n}|^2}{ \Psi_{\bf G}^{\rm BA}  }
	\bigg)\Bigg]
\end{align}
where 
\small
\begin{align}
	\nonumber
	\Psi_{\bf G}^{\rm BA} =   p_u\sum_{\substack{m = 1\\ m \neq n}}^{N_u}\gamma_m|{\bf g}_{n}^H {\bf D}^2_\alpha {\bf g}_m|^2 + {\bf g}_{n}^H({\bf D}^4_{\alpha} + {\bf D}^H_\alpha \mathbf{R}_{\mathbf{n}_{\rm q}\mathbf{n}_{\rm q}} {\bf W}_\alpha ){\bf g}_{n}
\end{align}
with $\pmb \alpha = [\alpha^2_1, \cdots, \alpha^2_{N_{\rm RF}}]^\intercal 
\text{ and }{\bf v}_{n} = \left[|g_{1,{n}}|^2,\cdots,|g_{N_{\rm RF},{n}}|^2\right]^\intercal.$
\normalsize
Note that when quantization bits are same across ADCs, $b_i = b_j$, $\forall i, j$, \eqref{eq:AchievableRate_BA} reduces to \eqref{eq:AchievableRate}.}

\begin{figure}[!t]\centering
\includegraphics[scale = 0.4]{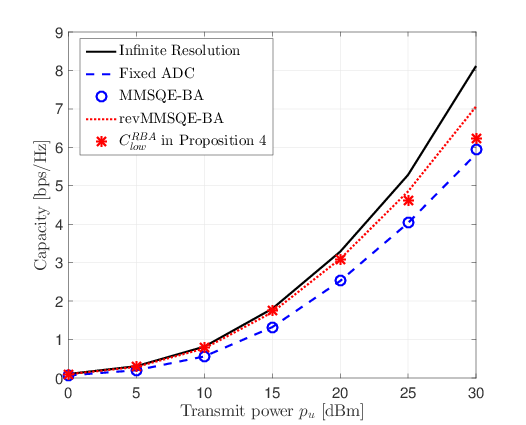}
\caption{Simulation results of the average capacity for $N_u = 8$ users and $N_r = 256$ BS antennas with $\bar b =1$ constraint bit.} 
\label{fig:capacity}
\end{figure}

{\color{black}
\subsection{Average Capacity}
\label{subsec:num_capacity}

We compare the proposed BA algorithms with the fixed-ADC case and include the infinite-resolution ADC case to indicate an upper bound.
In Fig. \ref{fig:capacity}, the BA algorithms are applied with ${\bar b} = 1$.
Recall that $\bar {b}$ is the number of ADC bits for a fixed-ADC system, which we use to give a reference total ADC power in the constraint for the MMSQE problem.
This indicates that the total ADC power consumption with the algorithms is equal or less than that of $N_{\rm RF}$ $1$-bit ADCs. 
In Fig. \ref{fig:capacity}, the revMMSQE-BA improves the average capacity compared to the fixed ADCs.
Moreover, it nearly achieves the capacity similar to the one with infinite-resolution ADCs in the low SNR regime, offering large energy saving from ADCs.
The MMSQE-BA, however, does not show capacity improvement because the large pathloss makes the noise dominant over the range of $p_u$ in Fig. \ref{fig:capacity}.
Consequently, the performance gap between the algorithms demonstrates the noise-robustness of the revMMSQE-BA.
Although the gap between the capacity with the revMMSQE-BA and its approximation $\tilde C^{RBA}_{low}$ in \eqref{eq:capacity_lowSNR_BA} increases as $p_u$ increases,
$\tilde C^{RBA}_{low}$ provides a good approximation of the capacity with the revMMSQE-BA algorithm in the low SNR regime.}

\begin{figure}[!t]
\centering
$\begin{array}{c c}
{\resizebox{0.86\columnwidth}{!}
{\includegraphics{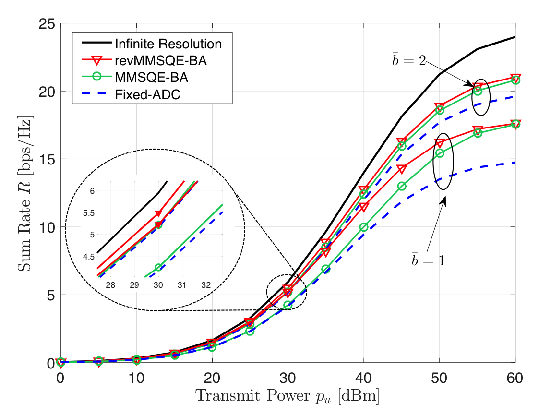}}
}\\
\mbox{(a)} \\
{\resizebox{0.88\columnwidth}{!}
{\includegraphics{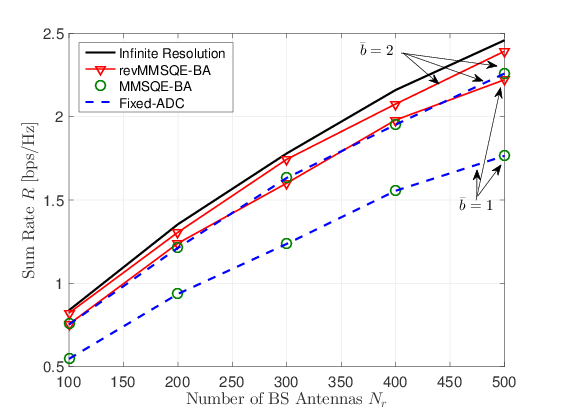}}
}\\
\mbox{(b)}
\end{array}$
\caption{
Simulation results of uplink sum rate for $\bar b \in \{1, 2\}$ constraint bits and $N_u = 8$ users (a) with $N_r = 256$ BS antennas and (b) with $p_u = 20$ dBm.} 
\label{fig:revBA}
\end{figure}

\begin{figure}[!t]
\centering
$\begin{array}{c c}
{\resizebox{0.86\columnwidth}{!}
{\includegraphics{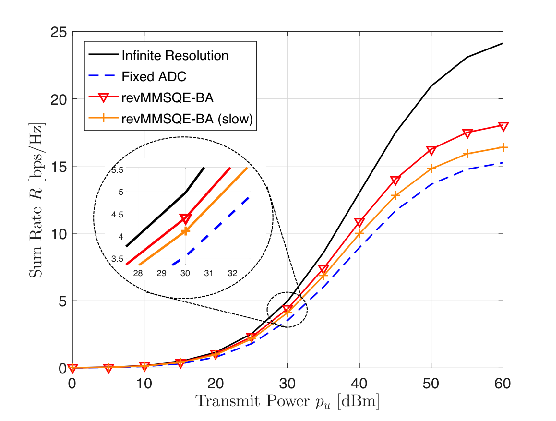}}
}\\
\mbox{(a)} \\
{\resizebox{0.86\columnwidth}{!}
{\includegraphics{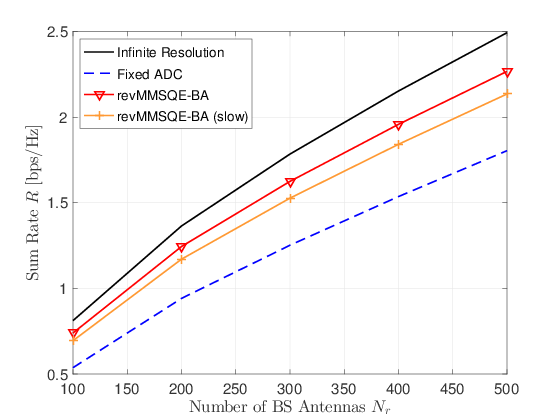}}
}\\ 
\mbox{(b)}
\end{array}$
\caption{
Simulation results of uplink sum rate for $\bar b = 1$ constraint bit and $N_u = 8$ users (a) with $N_r = 256$ BS antennas and (b) with $p_u = 20$ dBm, including the case of switching at slowly changing channel characteristics.} 
\label{fig:revBA_slow}
\end{figure}

\subsection{Average Uplink Sum Rate}
\label{subsec:num_sol}

Fig. \ref{fig:revBA} shows the uplink sum rate of the MMSQE-BA, revMMSQE-BA and fixed-ADC systems (a) over different transmit power $p_u$ with $N_r = 256$ antennas and $N_u = 8$ users and (b) over the different number of BS antennas $N_r$ with $p_u = 20$ dBm transmit power and $N_u = 8$ users.
In Fig. \ref{fig:revBA}(a), the MMSQE-BA and revMMSQE-BA achieve the higher sum rate than the fixed-ADC system for both cases of $\bar b = 1$ and  $\bar b = 2$. 
In particular, the revMMSQE-BA provides the best sum rate over the entire $p_u$ while the MMSQE-BA shows a similar sum rate to the fixed-ADC case in the low SNR regime due to additive noise. 
This demonstrates that the revMMSQE-BA is robust to the noise. 
Notably, the rate of the MMSQE-BA becomes close to that of the revMMSQE-BA in the high SNR regime, which corresponds to the intuition that the revMMSQE-BA performs similarly to the MMSQE-BA in the high SNR regime.

In Fig. \ref{fig:revBA}(b), the revMMSQE-BA also offers the best sum rate for all cases over the entire $N_r$. 
Notice that the sum rate of the revMMSQE-BA with $\bar b =1$ shows similar rate to the fixed-ADC system with $\bar b = 2$, thus implying that the revMMSQE-BA achieves about the 1-bit better sum rate than the fixed-ADC system for the considered system.
In contrast to the revMMSQE-BA, the MMSQE-BA shows no improvement for $p_u = 20$ dBm because the noise power is dominant when allocating quantization bits due to the large pathloss of mmWave channels.
This, again, validates the noise-robustness of the revMMSQE-BA.
{\color{black} Table \ref{tb:BAresult} shows the average ratio of ADCs for different resolutions after applying the revMMSQE-BA algorithm for $\bar b = 1, 2 $ and $3$ with $p_u = 20$ dBm, $N_u = 8$, $N_r = 256$, and $N_{\rm RF} = 128$. 
Intuitively, the number of ADCs with higher resolution increases while that with lower resolution decreases as the constraint bits $\bar b$ increases.
For example, the average number of $1$-bit ADCs decreases from $36.10\  (0.282 \times 128)$ to $9.55$ while that of $3$-bit ADCs increases from $5.89$ to $36.53$ as $\bar b$ increases from 1 to 3.}

\begin{table}[!b]
\centering
\caption{{\color {black} Average Ratio of ADCs after Bit Allocation ($\%$)}}\label{tb:BAresult}
\begin{tabular}{ c | c c c c c c c }
  \thickhline
  Constraint&\multicolumn{7}{c}{ADC Resolutions (bits)} \\ 
 \cline{2-8}
  Bits & 0 & 1 & 2 & 3 & 4 & 5 & 6 \\
  \hline
 $\bar b  = 1$  & 40.78 & 28.20  & 26.46 & 4.46 &  0.10 & 0 & 0 \\
 $\bar b  = 2$   & 32.10 & 16.32  & 25.54 & 19.36 &  6.54 & 0.14 & 0 \\
 $\bar b  = 3$   &19.40 & 7.46  & 18.42 & 28.54 &  22.58 & 3.48 & 0.12 \\
  \thickhline
\end{tabular}
\end{table}

In Fig. \ref{fig:revBA_slow}, to consider more realistic implementation of the proposed BA algorithms, we evaluate the revMMSQE-BA with two different switching periods: the channel coherence time and the time-scale of slowly changing channel characteristics (slow switching).
We observe that the slow switching results in small decrease of the sum rate from the coherence-time switching, while still achieving higher sum rate than the fixed-ADCs.
Accordingly, the simulation results imply that the proposed hybrid architecture with slow switching can achieve the sum rate in between the revMMSQE-BA with the coherence-time switching and fixed-ADC systems.
In addition, the general trends of the ergodic rate for the proposed BA algorithm and the fixed-ADC (worst-case scenario) are similar with the performance gap.

Regarding the total power consumption of the receiver, there will be an additional benefit of using the BA algorithms.
The power saving from turning off the RF process associated with 0-bit ADCs (deactivated ADCs) as a consequence of BA can be accomplished. 
In Section \ref{subsec:EE}, we provide energy efficiency for different ADC configurations to incorporate the additional advantage of the BA algorithms in performance evaluation.

\subsection{Energy Efficiency}
\label{subsec:EE}

In this subsection, we evaluate the revMMSQE-BA in terms of energy efficiency. 
Energy efficiency can be defined as \cite{mo2016hybrid}
\begin{align}
	\nonumber
	\eta_{\rm eff} = \frac{R\,W}{P_{\rm tot}}\ {\rm bits/Joule}
\end{align}
where $P_{\rm tot}$ is the receiver power consumption.
Recall that $R$ is the sum rate over a single cell, $W$ is the transmission bandwidth.
Let $P_{\rm LNA}$, $P_{\rm PS}$, $P_{\rm RFchain}$, and $P_{\rm BB}$ represent power consumption in the low-noise amplifier, phase shifter, RF chain, and baseband processor, respectively.
Applying an additional power consumption term due to the ADC resolution switching $P_{\rm SW}(b_i)$, the receiver power consumption of the considered system in Fig. \ref{fig:system} is given as
\begin{align}
	\nonumber
	P_{\rm tot} = &N_rP_{\rm LNA} + N_{\rm act}(N_rP_{\rm PS} + P_{\rm RFchain}) \\ \nonumber
	&+ 2\sum_{i=1}^{N_{\rm RF}}\Big(P_{\rm ADC}(b_i) + P_{\rm SW}(b_i)\Big)+ P_{\rm BB} 
\end{align}
where $N_{\rm act}$ is the number of activated ADC pairs ($b_i \neq 0$). 
We assume $P_{\rm LNA} = 20$ mW, $P_{\rm PS} = 10$ mW, $P_{\rm RFchain} = 40$ mW, and $P_{\rm BB} = 200$ mW \cite{mendez2016hybrid, mo2016hybrid}. 
We consider $c = 494$ fJ/conv-step \cite{orhan2015low,chung20097} for $P_{\rm ADC}(b_i)$ in \eqref{eq:ADCpower}.
According to the measures in \cite{nahata2004high}, the switching power consumption $P_{\rm SW}(b_i)$ when switching from $b_i^{\rm p}$ bits to $b_i$ bits can be modeled as 
\begin{align}
	\label{eq:Psw_model}
	P_{\rm SW}(b_i) = c_{\rm sw}\big|2^{b_i} - 2^{b_i^{\rm p}}\big|, \quad i = 1,\cdots, N_{\rm RF}
\end{align} 
where $c_{\rm sw} = 3.47$ (or $0.94$) mW/conv-step if the resolution increases, $b_i > b^{\rm p}_i$ (or decreases, $b_i < b_i^{\rm p}$).
Notice that \eqref{eq:Psw_model} becomes zero when there is no change in resolution ($b_i = b_i^{\rm p}$).

In the simulation, we compare the following cases:  1) fixed-ADC, 2) revMMSQE-BA with coherence-time switching, 3) reMMSQE-BA with slow switching, and 4) mixed-ADC systems \cite{liang2016mixed}.
We also simulate the infinite-resolution ADC case for benchmarking, assuming $b_\infty = 12$ quantization bits for the case.
For the mixed-ADC system, we employ 1-bit and 7-bit ADCs, and assigns 7-bit ADCs to the RF chains with the strongest channel gains by satisfying the total ADC power constraint $N_{\rm RF}P_{\rm ADC}(\bar b)$.
Consequently, the number of 1-bit and 7-bit ADCs varies depending on the power constraint.
Note that, except for the revMMSQE-BA, the number of activated ADC pairs is equal to that of RF chains $N_{\rm act} = N_{\rm RF}$.
In addition, we impose two harsh simulation constraints on our algorithm.
First, we apply the switching power consumption $P_{\rm SW}(b_i)$ only to the revMMSQE-BA despite the fact that the mixed-ADC system also consumes ADC switching power.
Second, we assume that channel coherence time is equal to symbol duration, implying that if the switching operates at the channel coherence time, it occurs at every transmission.

\begin{figure}[!t]
\centering
$\begin{array}{c c}
{\resizebox{0.86\columnwidth}{!}
{\includegraphics{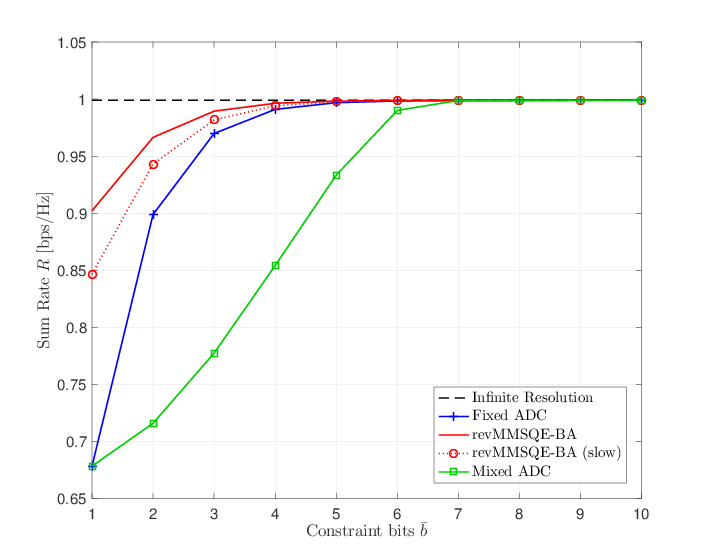}}
}\\ \mbox{(a)} \\
{\resizebox{0.86\columnwidth}{!}
{\includegraphics{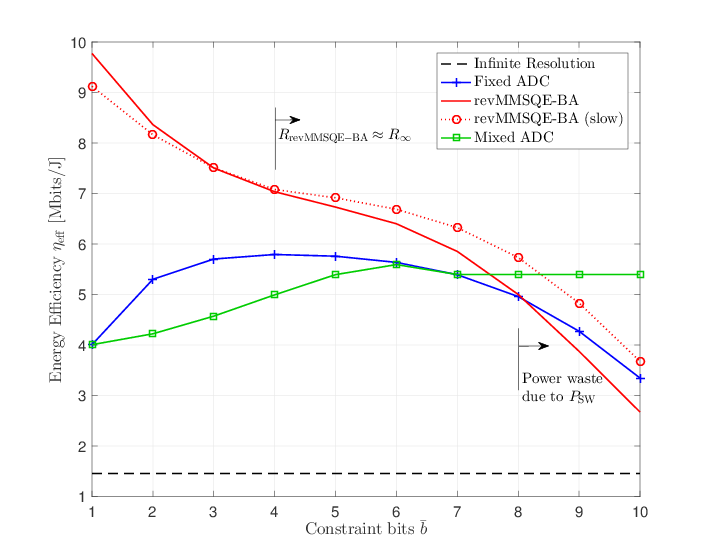}}
} \\
 \mbox{(b)}
\end{array}$
\caption{
Uplink (a) sum rate and (b) energy efficiency simulation results with $N_r = 256$ BS antennas, $N_u = 8$ users and $p_u = 20$ dBm transmit power.} 
\label{fig:REE}
\end{figure}

\begin{figure}[t]
\centering
$\begin{array}{c c}
{\resizebox{0.88\columnwidth}{!}
{\includegraphics{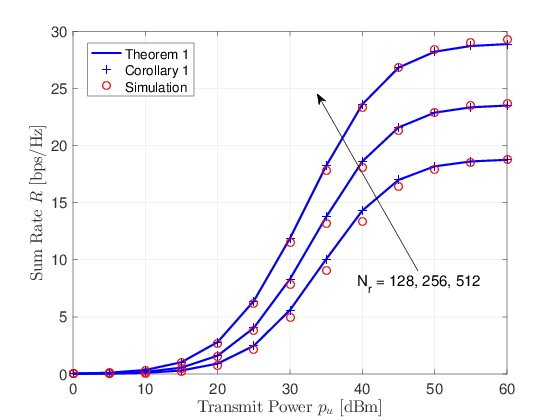}}
}\\
\mbox{(a)}\\
{\resizebox{0.88\columnwidth}{!}
{\includegraphics{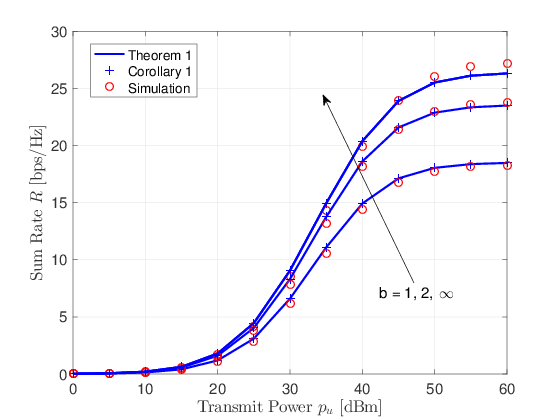}}
}\\ 
\mbox{(b)}
\end{array}$
\caption{
Uplink sum rate of the analytical approximations and the simulation results for $N_u = 8$ users with (a) $b = 2$ quantization bits and $N_r \in \{ 128, 256, 512\}$ BS antennas, and (b) $N_r = 256$ and $b \in \{1, 2, \infty\}$.} 
\label{fig:SumRate}
\end{figure}

In Fig. \ref{fig:REE}, the sum rate and energy efficiency are simulated over different constraint bits $\bar b$.
We note that the fixed-ADC, revMMSQE-BA, revMMSQE-BA (slow) and mixed-ADC system consume the similar total ADC power while the total power consumptions $P_{\rm tot}$ of the revMMSQE-BA and revMMSQE-BA (slow) are not equal to the other cases due to the deactivated (0-bit) ADCs and the switching power $P_{\rm SW}(i)$.
In Fig. \ref{fig:REE}(a), the revMMSQE-BA shows the higher sum rate than the fixed-ADC and mixed-ADC cases in the low-resolution regime ($\bar b\leq 4$), and it converges to the sum rate of the infinite-resolution case faster than the other two cases.
Since the slow switching cannot capture the channel fluctuations caused by small-scale fading, the revMMSQE-BA (slow) shows a lower sum rate than the revMMSQE-BA in the low-resolution regime.
The revMMSQE-BA (slow), however, achieves the higher sum rate than the fixed-ADC and mixed-ADC cases in the low-resolution regime ($\bar b\leq 4$). Given the same power constraint, the mixed-ADC system discloses the lowest sum rate due to the dominant ADC power consumption from the high-resolution ADCs.

In Fig. \ref{fig:REE}(b), the revMMSQE-BA provides the highest energy efficiency in the low-resolution regime, achieving the highest rate.
In the high-resolution regime ($\bar b \geq 8$), the energy efficiency of the revMMSQE-BA is lower than that of the fixed-ADC and mixed-ADC systems due to the dissipation of power consumption in resolution switching. 
Note that although the revMMSQE-BA (slow) shows a lower energy efficiency than the revMMSQE-BA when $\bar b < 4$, it achieves a higher energy efficiency as $\bar b$ increases.
This is because the slow switching accomplishes a better tradeoff between the rate and the switching power consumption than the coherence-time switching as $\bar b$ increases.
Regarding the sum rate and energy efficiency, it is not worthwhile to consider the number of constraint bits above $\bar b = 6$ because the sum rate of the revMMSQE-BA is already comparable with the infinite-resolution system around $\bar b = 4$ with $22$\% better energy efficiency than the fixed-ADC case. 
Therefore, the simulation results demonstrate that the revMMSQE-BA with coherence-time switching provides the best performance, and that the slow switching approach offers performance improvement concerning the implementation.
Fig. \ref{fig:REE} indeed, implies that the proposed BA algorithm eliminates most of the quantization distortion requiring the minimum power consumption.
Accordingly, we can employ existing digital beamformers to the power-constrained system when using the proposed BA algorithm in the low-resolution regime.

\begin{figure}[!t]\centering
\includegraphics[scale = 0.43]{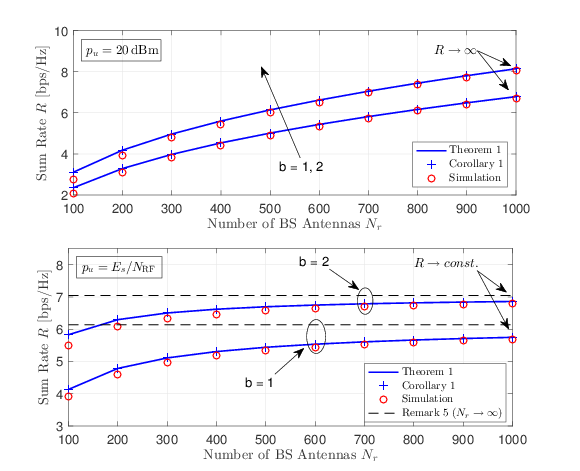}
\caption{Uplink sum rate of the analytical and simulation results for $b \in \{1, 2\}$ quantization bits with $N_u = 8$ users. 
Two different cases of the transmit power are considered: i) $p_u = 20$ dBm and ii) $p_u = E_s/N_{\rm RF}$ with $E_s = 45$ dBm.} 
\label{fig:remark4}
\end{figure}

\subsection{Worst-Case Anaylsis Validation}
\label{sec:Validation}

In this subsection, we validate Theorem \ref{thm:SE_uniform} and Corollary \ref{col:path_large}, and confirm the observations in Section \ref{sec:main}.
For simulation, user locations are fixed once they are dropped in the cell, which corresponds to the setting of the analytical derivations in Section \ref{sec:main}.
Fig. \ref{fig:SumRate} illustrates the sum rate for (a) $N_r \in \{128, 256,512\}$ BS antennas with $b = 2$ quantization bits and for (b) $N_r = 256$ with $b \in \{1, 2,\infty\}$ over different transmit power $p_u$.
The analytical results show accurate alignments with the simulation results in Fig. \ref{fig:SumRate}(a) and Fig. \ref{fig:SumRate}(b), which validates Theorem \ref{thm:SE_uniform} and Corollary \ref{col:path_large}.
The sum rates show the transition from the noise-limited regime to the interference-limited regime as $p_u$ increases.
This observation corresponds to the convergence of the achievable rate with increasing transmit power in Remark \ref{remark:3}.
Notably, the sum rate with $b = \infty$ also tends to converge in Fig. \ref{fig:SumRate}(b) due to the interference power in \eqref{eq:rm3-2}.

We also evaluate our analytical results over the different number of BS antennas.
The fixed transmit power of 20 dBm ($p_u = 20$ dBm) and the power-scaling law ($p_u = E_s/N_{\rm RF}$) with $E_s = 45$ dBm are considered in Fig. \ref{fig:remark4}.
It is observed that Fig. \ref{fig:remark4} validates the derived approximations of the achievable rate for the different power assumptions and offers intuitions discussed in Remark \ref{remark:3} and \ref{remark:L_scale}:
the uplink sum rate with $p_u = 20$ dBm keeps increasing as $N_r$ increases, and 
we can maintain the sum rate by decreasing the transmit power $p_u$ proportionally to $1/N_r$ ($N_{\rm RF}  = \tau N_r$).
In addition, the sum rate with the power-scaling law converges to \eqref{eq:power_scale} and can be improved by increasing the number of quantization bits (larger $\alpha$) as illustrated in Fig. \ref{fig:remark4}.
In Section \ref{subsec:num_sol}, the fixed-ADC approach serves the lower bound of the sum rate in the proposed architecture showing the similar trend to the BA strategies with performance gap.
Therefore, the derived ergodic rate in Theorem \ref{thm:SE_uniform} explains general tradeoffs of the proposed system serving the lower bound of the sum rate.

\section{Conclusion}
\label{sec:Con}

This paper proposes the hybrid MIMO receiver architecture with resolution-adaptive ADCs for mmWave communications.
Employing array response vectors for analog beamforming, we investigate the ADC bit-allocation problem to minimize the quantization distortion of received signals by leveraging the flexibility of ADC resolutions.
One key finding is that the optimal number of ADC bits increases logarithmically proportional to the RF chain's SNR raised to the $1/3$ power.
{\color{black} The proposed algorithms outperform the conventional fixed ADCs in the proposed architecture in the low-resolution regime. 
In particular, the revised algorithm shows a higher capacity, sum rate
and energy efficiency in any communication environment.}
Furthermore, the revised algorithm makes the quantization error of desired signals negligible while achieving higher energy efficiency than the fixed-ADC system.
Having negligible quantization distortion allows existing state-of-the-art digital beamforming techniques to be readily applied to the proposed system.
{\color{black} The approximated capacity expression captures the capacity that the proposed flexible ADC architecture can achieve adaptively for a given channel by using the revised algorithm.}
The derived ergodic rate from the worst-case analysis explains the tradeoffs of the proposed system in terms of system parameters, serving as the lower performance bound of the proposed system.
Therefore, for a future mmWave base station, this paper provides a spectrum- and energy-efficient mmWave receiver architecture with analysis and novel performance-increasing algorithms by means of adapting quantization resolution.

\begin{appendices}

\section{Proof of Proposition 1}
\label{appx:BA_power}

By defining $z_i = 2^{-2b_i}$, $\bar z = 2^{-2\bar b}$ and $c_i = \sigma_{y_i}^2$ where $\sigma^2_{y_i}= p_u\|[\mathbf{H}]_{i,:}\|^2 + 1$, we can convert~\eqref{eq:opt_power} into a simpler form given as
\begin{gather}
\label{eq:trans_problem}
{\bf \hat z} = \argmin_{{\bf z}  > {\bf 0}_{N_{\rm RF}}} {\bf c^\intercal z}  \quad \text{s.t.} \quad \sum_{i=1}^{N_{\rm RF} }{z_i^{-\frac{1}{2}}} \leq N_{\rm RF} \bar z^{-\frac{1}{2}}
\end{gather}
where ${\bf 0}_{N_{\rm RF}}$ is a $N_{\rm RF} \times 1$ zero vector. 
Note that~\eqref{eq:trans_problem} is the equivalent problem to~\eqref{eq:opt_power} and is a convex optimization problem. 
The global optimal solution of~\eqref{eq:opt_power} can be achieved by the KKT conditions for~\eqref{eq:trans_problem}.

By relaxing ${\bf z >0}_{N_{\rm RF}}$ to ${\bf z \geq 0}_{N_{\rm RF}}$ and defining ${\bf v}$ as ${\bf v} = \begin{bmatrix}
\sum_{i=1}^{N_{\rm RF}}{z_i^{-\frac{1}{2}}}- N_{\rm RF} \bar z^{-\frac{1}{2}} \\ 
 -{\bf z}
\end{bmatrix},$
KKT conditions become
\begin{align} 
\label{eq:kkt1}
{\bf c} + J_{\bf v}({\bf z})^\intercal{\pmb \mu} &= {\bf 0}_{N_{\rm RF}} 
\\ \label{eq:kkt2}
\mu_i\,v_i &= 0, \quad \forall \,i 
\\ \label{eq:kkt3}
\mathbf{v} &\leq \mathbf{0}_{(N_{\rm RF}+1)}
\\ \label{eq:kkt4}
{\pmb \mu} &\geq \mathbf{0}_{(N_{\rm RF}+1)}
\end{align}
where the Jacobian matrix of $\mathbf{v}$ is 
$J_{\bf v}(\mathbf{z}) = \begin{bmatrix}
       \mathbf{p}
 &      -\mathbf{I}_{N_{\rm RF}}
\end{bmatrix}^\intercal$ with ${\bf p} = \left[-\frac{1}{2}z_1^{-\frac{3}{2}}, \cdots,-\frac{1}{2}z_{N_{\rm RF}}^{-\frac{3}{2}} \right]^\intercal$, and ${\pmb \mu} \in \mathbb{R}^{(N_{\rm RF}+1)}$ is the vector of the Lagrangian multipliers. 
Since $z_i \neq 0$, $i = 1, \cdots, N_{\rm RF}$, the Lagrangian multipliers become $\mu_j = 0$, $j = 2, \cdots, N_{\rm RF}+1$, from~\eqref{eq:kkt2}.
Hence, \eqref{eq:kkt1} guarantees $\mu_1 \neq 0$ as ${\bf c \neq 0}_{N_{\rm RF}}$, and~\eqref{eq:kkt2} gives $v_1 = 0$ meaning that the equality holds for the power constraint.
From~\eqref{eq:kkt1} and~\eqref{eq:kkt2}, we have
$c_i = \frac{1}{2}z_i^{-\frac{3}{2}}\mu_1 \text{ and }\sum_{i=1}^{N_{\rm RF}}{z_i^{-\frac{1}{2}}}= N_{\rm RF} \bar z^{-\frac{1}{2}}$,
which gives $\mu_1 =\left\{ \frac{\bar z^{\frac{1}{2}}}{N_{\rm RF}}\sum_{j=1}^{N_{\rm RF}}(2c_j)^{\frac{1}{3}}\right\}^3 > 0$.
Putting $\mu_1 =\left\{ \frac{\bar z^{\frac{1}{2}}}{N_{\rm RF}}\sum_{j=1}^{N_{\rm RF}}(2c_j)^{\frac{1}{3}}\right\}^3 $ into $c_i = \frac{1}{2}z_i^{-\frac{3}{2}}\mu_1$, we have 
\begin{align}
	\label{eq:BA_mid}
	\hat z_i = \bar z \bigg\{ 1/{N_{\rm RF}}\cdot \sum_{j = 1}^{N_{\rm RF}}{\left({c_j}/{c_i}\right)}^{\frac{1}{3}}\bigg\}^2.
\end{align}
Since $\hat z_i > 0$, the solution $\hat{ \mathbf{z}}$ meets the KKT conditions.
Using the definitions of $z_i, \bar z \text{ and }c_i$, we obtain \eqref{eq:opt_BA} from~\eqref{eq:BA_mid}.
\qed

{\color{black} 
\section{Proof of Proposition \ref{pr:maxGMI}}
\label{appx:maxGMI}

With the optimal combiner ${\bf w}^{\rm opt}_{n} = {\bf R}^{-1}_{{\bf y}_{\rm q}{\bf y}_{\rm q}}{\bf R}_{{\bf y}_{\rm q} s_n}$ \cite{liang2016mixed}, \eqref{eq:kappa} becomes
\begin{align}
	\label{eq:kappa_opt}
	\kappa({\bf w}^{\rm opt}_{n}\ {\bf b}) = {\bf R}^H_{{\bf y}_{\rm q}s_n}{\bf R}^{-1}_{{\bf y}_{\rm q}{\bf y}_{\rm q}}{\bf R}_{{\bf y}_{\rm q}s_n}. 
\end{align}
In the low SNR regime, ${\bf R}^H_{{\bf y}_{\rm q}{\bf y}_{\rm q}}$ is computed as
\begin{align}
	\nonumber
	\lim_{p_u \to 0}{\bf R}_{{\bf y}_{\rm q}{\bf y}_{\rm q}} &= \lim_{p_u \to 0}\Big(p_u {\bf D}_{\alpha}{\bf H}_{\rm b}{\bf H}_{\rm b}^H {\bf D}_{\alpha}^H +  {\bf D}_{\alpha}^2 + {\bf R}_{{\bf n}_{\rm q}{\bf n}_{\rm q}}\Big)
	\\ \nonumber
	& = \lim_{p_u \to 0}\Big({\bf D}_{\alpha} + {\bf D}_\alpha {\bf D}_\beta \,{\rm diag}\big(p_u{\bf H_{\rm b}}{\bf H}_{\rm b}^H\big)\Big)
	\\ \label{eq:Ryqyq}
	& = {\bf D}_{\alpha}.
\end{align}
The correlation vector ${\bf R}_{{\bf y}_{\rm q}s_n}$ is computed as
\begin{align}
	\label{eq:R_yqs}
	{\bf R}_{{\bf y}_{\rm q}s_n} = \mathbb{E}[{\bf y}_{\rm q}s_n] = \sqrt{p_u}{\bf D}_{\alpha}{\bf h}_{{\rm b},n}.
\end{align}
Using \eqref{eq:Ryqyq} and \eqref{eq:R_yqs}, $\kappa({\bf w}^{\rm opt}_{n}\ {\bf b})$ \eqref{eq:kappa_opt} becomes
\begin{align}
	\label{eq:kappa_opt2}
	\kappa({\bf w}^{\rm opt}_{n},\ {\bf b}) 
	&= p_u\sum_{i = 1}^{N_{\rm RF}} \Big(1-\frac{\pi \sqrt{3}}{2}2^{-2b_i}\Big)|h_{{\rm b},n,i}|^2
\end{align}
where $h_{{\rm b},n,i}$ is the $i$th element of ${\bf h}_{{\rm b},n}$.
Since we have $\log(1+x/(1-x)) = x + o(x)$ as $x \to 0$, the GMI becomes $I_{n}^{\rm GMI}({\bf w}_n^{\rm opt},{\bf b}) = \kappa({\bf w}_n^{\rm opt}, {\bf b}) + o(\kappa({\bf w}_n^{\rm opt}, {\bf b}))$ in the low SNR regime, where $o(\cdot)$ is little-o.
Thus, the objective function in the GMI maximization problem \eqref{eq:maxGMI} with the low SNR approximation becomes
\begin{align}
   	\nonumber
        \hat {\mathbf{{b}}}^{GMI} 
        &\simeq \argmax_{{\bf b}} \sum_{n=1}^{N_{u}} \kappa({\bf w}_n^{\rm opt}, {\bf b})
	\\ \label{eq:bGMI}
	&= \argmin_{{\bf b}} \sum_{n=1}^{N_{u}} \sum_{i=1}^{N_{\rm RF}} p_u \frac{\pi \sqrt{3}}{2}2^{-2b_i}|h_{{\rm b},n,i}|^2.
\end{align}
Note that \eqref{eq:bGMI} is equal to the objective function in \eqref{eq:opt_power_rev}.
This completes the proof.
\qed
}

\section{Proof of Theorem \ref{thm:SE_uniform}}
\label{appx:SE_uniform}

Since the beamspace channels $\bf g$ are sparse, we use an indicator function to characterize the sparsity. 
The indicator function $\mathds{1}_{\{i\in \mathcal{A}\}}$ is defined by
\begin{align}
\nonumber
\mathds{1}_{\{i\in \mathcal{A}\}} = 
\begin{cases}
   1 &\quad \text{if } i \in \mathcal{A}\\
    0 &\quad \text{else.}
\end{cases}
\end{align} 
Utilizing the function $\mathds{1}_{\{\cdot\}}$, we first model the $i$th complex path gain of the $n$th user $g_{i,n}$ as
\begin{align}
\label{eq:g}
g_{i,n} = \mathds{1}_{\{i \in \mathcal{P}_n\}}\xi_{i,n},\quad n = 1,\cdots,N_u
\end{align}
where $\mathcal{P}_n = \big\{i\, \big|\, g_{i,n} \neq 0, i = 1,\cdots, N_{\rm RF}\big\}$ and $\xi_{i,n}$ is an IID complex Gaussian random variable which follows $\mathcal{CN}(0,1)$. 
We compute the expectation of the number of propagation paths $\mathbb{E} [ L ]$. 
Since we assume $L \sim \max\{Q, 1\}$ with $Q \sim Poisson(\lambda_{\rm p})$, the expectation $\mathbb{E}[L]$ is derived as 
\begin{align}
	\label{eq:E_L}
	\mathbb{E}\big[L\big] &= e^{-\lambda_{\rm p}} + \sum_{\ell= 1}^{\infty}\ell \,\frac{\lambda_{\rm p}^{\ell}e^{-\lambda_{\rm p}}}{\ell!} = e^{-\lambda_{\rm p}} + \lambda_{\rm p}.
\end{align}
Similarly, $\mathbb{E}\big[L^2\big]$ can be given as
\begin{align}
	\label{eq:E_L^2}
	\mathbb{E}\big[L^2\big] &= e^{-\lambda_{\rm p}} + \sum_{\ell= 1}^{\infty}\ell^2 \,\frac{\lambda_{\rm p}^{\ell}e^{-\lambda_{\rm p}}}{\ell!} \stackrel{(a)}= e^{-\lambda_{\rm p}} + \lambda_{\rm p} + \lambda_{\rm p}^2
\end{align}
where (a) comes from $\mathbb{E}\big[Q^2\big] = \sum_{\ell= 1}^{\infty}\ell^2 \,\frac{\lambda_{\rm p}^{\ell}e^{-\lambda_{\rm p}}}{\ell!}$ and $\mathbb{E}\big[Q^2\big] = \text{Var}[Q] +\big\{\mathbb{E}[Q]\big\}^2$.

Now, we solve the expectations in Lemma \ref{lem1}. 
We have $|g_{i,n}|^2 = \mathds{1}_{\{i \in \mathcal{P}_n\}}|\xi_{i,n}|^2$ and $|\xi_{i,n}|^2$ is distributed as exponential random variable with mean of the value $1$, i.e., $|\xi_{i,n}|^2 \sim \exp(1)$.
Despite the fact that the dimension of ${\bf g}_{i,n}$ is $N_{\rm RF}$, $\|{\bf g}_{i,n}\|^2$ follows the chi-square distribution of $2L_n$ degrees of freedom  $\|{\bf g}_{n}\|^2 \sim \chi^2_{2L_n}$ due to the channel sparsity, where $L_n$ is the number of propagation paths for the $n$th user. 
Then, we derive the expectation of $\|{\bf g}_n\|^2$ for the AWGN noise power in \eqref{eq:E_psi} as
\begin{align}
	\label{eq:E_N2}
	\mathbb{E}\big[\|{\bf g}_n\|^2\big] & = \mathbb{E}\Big[\mathbb{E}\big[\|{\bf g}_n\|^2 \big | L_n\big]\Big] \stackrel{(a)} = e^{-\lambda_{\rm p}} + \lambda_{\rm p}
\end{align}
where (a) comes from $\|{\bf g}_{n}\|^2 \sim \chi^2_{2L_n}$ and \eqref{eq:E_L}.
Similarly, the expectation of the desired signal power in \eqref{eq:rate_apprx} is
\begin{align}
\nonumber
\mathbb{E}\Big[\|{\bf g}_{n}\|^4\Big]  &= \mathbb{E}\Big[\mathbb{E}\big[\|{\bf g}_{n}\|^4\,\big |\, L_n\big]\Big]
\\ \nonumber
&=  \mathbb{E}\bigg[\text{Var}\big[\|{\bf g}_{n}\|^2 |\, L_n\big]  + \Big\{\mathbb{E}\big[\|{\bf g}_{n}\|^2 |\, L_n \big]\Big\}^2\bigg]
\\ \label{eq:E_D}
&\stackrel{(a)}= \lambda_{\rm p}^2 + 2\lambda_{\rm p} + 2e^{-\lambda_{\rm p}}
\end{align}
where (a) comes from $\|{\bf g}_{n}\|^2 \sim \chi^2_{2L_n}$, \eqref{eq:E_L} and \eqref{eq:E_L^2}.

To further derive $\mathbb{E}\big[\tilde \Psi_{\bf G}\big]$ in \eqref{eq:E_psi}, we solve the inter-user interference power $\mathbb{E}\big[|{\bf g}_{n}^H {\bf g}_k|^2\big]$ for $k \neq n$, which is given as
\begin{align}
	\nonumber
	\mathbb{E}&\Big[|{\bf g}_{n}^H  {\bf g}_k|^2\Big] 
	= \mathbb{E}\left[\left(\sum_{i = 1}^{N_{\rm RF}} g_{i,n}^{*}g_{i,k}\right)\left(\sum_{j = 1}^{N_{\rm RF}} g_{j,n} g_{j,k}^*\right)\right]
	\\\label{eq:E_N1_b}
	&= \sum_{i=1}^{N_{\rm RF}}\mathbb{E}\Big[ |g_{i,n}|^2|g_{i,k}|^2\Big]\stackrel{(a)}= \sum_{i=1}^{N_{\rm RF}}\mathbb{E}\Big[\mathds{1}_{\{i \in \mathcal{P}_{n},i \in \mathcal{P}_{k} \}}\Big].
\end{align}
Note that (a) comes from $g_{i,n} = \mathds{1}_{\{i \in \mathcal{P}_n\}}\xi_{i,n}$ defined in \eqref{eq:g} and the independence between $\xi_{i,n}$ and $\xi_{i,k}$ when $k \neq n$.
Furthermore, $\mathbb{E}\Big[\mathds{1}_{\{i \in \mathcal{P}_{n},i \in \mathcal{P}_{k} \}}\Big]$ in \eqref{eq:E_N1_b} can be computed as
\begin{align}
	\nonumber
	\mathbb{E}\Big[\mathds{1}_{\{i \in \mathcal{P}_{n},i \in \mathcal{P}_{k} \}}\Big] & \stackrel{(a)}= \bigg \{ \mathbb{E}\Big[\mathbb{E}\big[\mathds{1}_{\{i \in \mathcal{P}_{n}\}} \big | L_n\big] \Big]\bigg\}^2
	\\ \label{eq:E_1_pp}
	&=\Bigg(\frac{\mathbb{E}\big[L_n\big]}{N_{\rm RF}}\Bigg)^2 \stackrel{(b)}=\left(\frac{\lambda_{\rm p} + e^{-\lambda_{\rm p}}}{N_{\rm RF}}\right)^2
\end{align}
where (a) is from the IID of $L_n$ and the independence between the two events: $\{i \in \mathcal{P}_{n}\}$ and $\{i \in \mathcal{P}_{k} \}$, and (b) comes from \eqref{eq:E_L}.
Putting \eqref{eq:E_1_pp} into \eqref{eq:E_N1_b}, $\mathbb{E}\big[|{\bf g}_{n}^H {\bf g}_k|^2\big]$ finally becomes 
\begin{align}
	\label{eq:E_N1}
	\mathbb{E}\Big[|{\bf g}_{n}^H {\bf g}_k|^2\Big] = \frac{(\lambda_{\rm p} + e^{-\lambda_{\rm p}})^2}{N_{\rm RF}}.
\end{align}

Lastly, we compute the quantization noise power in \eqref{eq:E_psi} as
\small
\begin{align}
\nonumber
&\mathbb{E}\Big[{\bf g}_{n}^H {\rm diag}\big(p_u{\bf G} {\bf D}_\gamma {\bf G}^H + {\bf I}_{N_{\rm RF}}\big) {\bf g}_{n}\Big]
\\ \nonumber
&= \mathbb{E}\Bigg[\sum_{i=1}^{N_{\rm RF}}|g_{i,n}|^2 \bigg(p_u\sum_{\substack{k = 1\\ k \neq n}}^{N_u}\gamma_k |g_{i,k}|^2 + p_u \gamma_n |g_{i,n}|^2 + 1\bigg)\Bigg]
\\ \nonumber
&=\sum_{i=1}^{N_{\rm RF}}\bigg( p_u\sum_{\substack{k=1 \\ k\neq n}}^{N_u}\gamma_k \mathbb{E}\Big[|g_{i,k}|^2|g_{i,n}|^2\Big]+ \mathbb{E}\Big[ p_u\gamma_n |g_{i,n}|^4+ |g_{i,n}|^2\Big]\bigg)
\\ \nonumber
&\stackrel{(a)} = \sum_{i = 1}^{N_{\rm RF}}\bigg( p_u \sum_{\substack{k=1\\k \neq n}}^{N_u}\gamma_k \mathbb{E}\Big[\mathds{1}_{\{i \in \mathcal{P}_k, i \in \mathcal{P}_n\}}\Big]+ \big(2 p_u\gamma_n +1 \big)\mathbb{E}\Big[\mathds{1}_{\{i \in \mathcal{P}_n\}}\Big]\bigg)
\\ \label{eq:E_N3}
&\stackrel{(b)} = p_u\frac{(\lambda_{\rm p} + e^{-\lambda_{\rm p}})^2}{N_{\rm RF}}\sum_{\substack{k=1\\k \neq n}}^{N_u}\gamma_k + (\lambda_{\rm p} + e^{-\lambda_{\rm p}})(2p_u\gamma_n + 1)
\end{align}
\normalsize
where (a) and (b) are from \eqref{eq:g} and \eqref{eq:E_1_pp}, respectively.
Substituting \eqref{eq:E_N2}, \eqref{eq:E_D}, \eqref{eq:E_N1} and \eqref{eq:E_N3} into \eqref{eq:rate_apprx} and simplifying the equations, we derive the final result \eqref{eq:SE_uniform}.
\qed

{\color{black}
\section{Multi-Cell Scenario}
\label{appx:multi-cell}

\subsection{Bit-Allocation Problem}
Assuming a multi-cell MIMO receiver, let $\bar{\bf y}$ be the received signal after analog beamforming at the BS with out-of-cell interference.
Then, the signal is given as
\begin{align}
	\label{eq:multicell_y}
	\bar {\bf y} = \sqrt{p_u}{\bf H}_{\rm b} {\bf s} + \sqrt{p_u} \sum_{i = 1}^{N_c} {\bf H}^{i}_{\rm b}{\bf s}^{i} + {\bf n} = \sqrt{p_u} \bar{\bf H}_{\rm b} \bar {\bf s} + {\bf n}
\end{align}
with $\bar{\bf H}_{\rm b} = [{\bf H}_{\rm b}, {\bf H}_{\rm b}^{1}\dots,{\bf H}^{N_c}_{\rm b}] \text{, }  \bar{\bf s}^{\intercal} = [{\bf s}^{\intercal},({\bf s}^{1})^{\intercal},\dots,({\bf s}^{N_c})^{\intercal}]$ .
%
Here, $N_c$ denotes the number of interference cells, ${\bf H}_{\rm b}^i$ is the beamspace channel matrix from $N_u$ users in the cell $i$ to the BS in interest, ${\bf s}^i$ is the signals of users in the cell $i$, and ${\bf n}$ is the zero mean and unit variance complex Gaussian noise.
Note that $\sqrt{p_u} \sum_{i = 1}^{N_c} {\bf H}^{i}_{\rm b}{\bf s}^{i}$ is the out-of-cell interference from $N_c$ cells with $N_u$ users in each cell, and $\bar {\bf H}_{\rm b}$ is decomposed as $\bar {\bf H}_{\rm b} = \bar {\bf G} \bar {\bf D}_\gamma^{1/2}$ where $\bar {\bf G} = [ {\bf G}, {\bf G}^1, \dots, {\bf G}^{N_c}]$ and $\bar {\bf D}_\gamma = {\rm diag}({\bf D}_\gamma, {\bf D}_\gamma^1, \dots,{\bf D}_\gamma^{N_c} )$.

Under the multi-cell assumption, the BA problem is also formulated as same as \eqref{eq:opt_power}
with $\mathcal{E}_{\bar y_i}(b_i)  = \frac{\pi\sqrt{3}}{2}\sigma_{\bar y_i}^2\,2^{-2b_i}$. 
In this case, the difference of the BA problem for a single-cell and a multi-cell assumption is the variance, $\sigma^2_{\bar y_i}$.
From \eqref{eq:multicell_y}, the variance of $\bar y_i$ becomes $\sigma^2_{\bar y_i} = p_u\|[\bar{\bf H}_{\rm b}]_{i,:}\|^2+1$. 
Then, following the same steps of the proof in Appendix \ref{appx:BA_power},  the MMSQE-BA solution is given as \eqref{eq:opt_BA} with ${\rm SNR}_i^{\rm rf} = p_u\|[\bar {\bf H}_{\rm b}]_{i,:}\|^2$.
Accordingly, the MMSQE-BA algorithm for the multi-cell MIMO additionally requires CSI for all interference channels.

There will be no difference in the revMMSQE-BA problem as it will be formulated by ignoring the noise and out-of-cell interference.
Consequently, the problem remains unchanged leading to the same solution as \eqref{eq:opt_BA_rev} even with the multi-cell assumption, and \eqref{eq:opt_BA_rev} for the multi-cell MIMO does not require interference CSI.
Therefore, the revMMSQE-BA is still effective in multi-cell MIMO.

\subsection{Ergodic Achievable Rate}
Assuming fixed-ADC systems with same quantization bits across all ADCs under the multi-cell MIMO, the quantized received signal becomes $\bar {\bf y}_{\rm q} = \alpha \bar {\bf y} + {\bf n}_{\rm q}$ where $\bar {\bf y}$ is in \eqref{eq:multicell_y}.
Then, after applying the MRC receiver, 
the $n$th element of the signal vector is expressed as
\begin{align}
	\nonumber
	\bar y_{{\rm q},n}^{\rm mrc}  = &\, \alpha  \sqrt{p_u}{\bf h}_{{\rm b},n}^H {\bf h}_{{\rm b},n} s_{n} + \alpha  \sqrt{p_u}\sum_{\substack{k = 1\\ k \neq n}}^{N_u}{\bf h}_{{\rm b},n}^H {\bf h}_{{\rm b},k} s_k + 
	\\ \label{eq:y_multi}
	& \alpha \sqrt{p_u} \sum_{i=1}^{N_c}\sum_{k=1}^{N_u}{\bf h}_{{\rm b},n}^H {\bf h}_{{\rm b},k}^i s_k^i +\alpha {\bf h}_{{\rm b},n}^H   {\bf n} + {\bf h}_{{\rm b},n}^H  \bar{\bf n}_{\rm q}.
\end{align}
Accordingly, the noise-plus-interference power is given by
\begin{align}
	\nonumber
	& \bar \Psi_{\bf G} = p_u\alpha^2 \gamma_n \sum_{\substack{k = 1\\ k \neq n}}^{N_u}\gamma_k|{\bf g}_{n}^H {\bf g}_k|^2 +  p_u\alpha^2 \gamma_n\sum_{i=1}^{N_c}\sum_{k=1}^{N_u}\gamma_k^i |{\bf g}_{n}^H {\bf g}_k^i|^2 
	\\ \nonumber
	& + \alpha^2 \gamma_n \|{\bf g}_{n}\|^2 +\alpha (1-\alpha)\gamma_n {\bf g}_{n}^H {\rm diag}(p_u \bar {\bf G} \bar {\bf D}_\gamma \bar {\bf G}^H + {\bf I}_{N_{\rm RF}}) {\bf g}_{n}.
\end{align}
The approximated ergodic achievable rate for \eqref{eq:y_multi} can be derived in closed form by following similar steps of the proof of Theorem \ref{thm:SE_uniform} and is shown in Theorem \ref{thm:SE_uniform_multi}.
\begin{theorem}
	\label{thm:SE_uniform_multi} 
	Under the multi-cell assumption, the uplink ergodic achievable rate of the user $n$ in the considered system with fixed ADCs is derived in a closed-form approximation as
	\begin{align}
	\label{eq:SE_uniform_multi} 
	\tilde R_{n}^{\rm mc} = \log_2\left( 1+\frac{p_u \gamma_n \alpha \big(\lambda_{\rm p}^2  + 2\lambda_{\rm p} + 2e^{-\lambda_{\rm p}}\big)}{\bar\eta}\right)
\end{align}
where 
\footnotesize
\begin{align}
\nonumber
\bar \eta  = & \Big(\lambda_{\rm p}+ e^{-\lambda_{\rm p}}\Big)\cdot
\\ \nonumber
&\Bigg( \frac{p_u\big(\lambda_{\rm p} + e^{-\lambda_{\rm p}}\big)}{N_{\rm RF}} \Bigg(\sum_{\substack{k=1\\k\neq n}}^{N_u}\gamma_k + \sum_{j=1}^{N_c}\sum_{k=1}^{N_u}\gamma_k^j \Bigg) + 2p_u\gamma_n (1-\alpha) + 1 \Bigg).
\end{align}
\end{theorem}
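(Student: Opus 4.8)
The plan is to mirror the single-cell argument in Appendix~\ref{appx:SE_uniform} almost verbatim, isolating the one genuinely new quantity --- the out-of-cell interference power --- and showing it obeys the same closed form as the in-cell interference. First I would invoke Lemma~\ref{lem1} (its hypothesis, large $N_r$, is unaffected by adding interfering cells) to reduce $\tilde R_n^{\rm mc}$ to $\log_2\big(1 + p_u\alpha\gamma_n\,\mathbb{E}[\|{\bf g}_n\|^4]/\mathbb{E}[\tilde{\bar\Psi}_{\bf G}]\big)$, after normalizing numerator and denominator by $\alpha\gamma_n$ exactly as in~\eqref{eq:rate_apprx}. The numerator expectation $\mathbb{E}[\|{\bf g}_n\|^4] = \lambda_{\rm p}^2 + 2\lambda_{\rm p} + 2e^{-\lambda_{\rm p}}$ is reused unchanged from~\eqref{eq:E_D}, since the desired user's channel statistics are untouched.

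Next I would expand $\mathbb{E}[\bar\Psi_{\bf G}]$ term by term. Three of its four contributions are already evaluated in Appendix~\ref{appx:SE_uniform}: the in-cell interference $\mathbb{E}[|{\bf g}_n^H{\bf g}_k|^2] = (\lambda_{\rm p}+e^{-\lambda_{\rm p}})^2/N_{\rm RF}$ from~\eqref{eq:E_N1}, the AWGN term $\mathbb{E}[\|{\bf g}_n\|^2] = \lambda_{\rm p}+e^{-\lambda_{\rm p}}$ from~\eqref{eq:E_N2}, and the quantization-noise term~\eqref{eq:E_N3}. The only new object is $\mathbb{E}[|{\bf g}_n^H{\bf g}_k^i|^2]$. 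Modeling each out-of-cell path gain as IID $\mathcal{CN}(0,1)$ with support positions independent across users and cells --- consistent with the geometric model in Section~\ref{subsec:channel} --- the cross terms $\mathbb{E}[g_{j,n}^{*}g_{\ell,k}^i]$ with $j\neq\ell$ vanish by independence of the gains, so $\mathbb{E}[|{\bf g}_n^H{\bf g}_k^i|^2] = \sum_{j=1}^{N_{\rm RF}}\mathbb{E}[|g_{j,n}|^2|g_{j,k}^i|^2] = \sum_{j}\mathbb{E}\big[\mathds{1}_{\{j\in\mathcal{P}_n,\,j\in\mathcal{P}_k^i\}}\big]$, and the two support events are independent with the same marginal probability $(\lambda_{\rm p}+e^{-\lambda_{\rm p}})/N_{\rm RF}$ as in~\eqref{eq:E_1_pp}; hence $\mathbb{E}[|{\bf g}_n^H{\bf g}_k^i|^2] = (\lambda_{\rm p}+e^{-\lambda_{\rm p}})^2/N_{\rm RF}$, identical to the in-cell value. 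The same observation upgrades~\eqref{eq:E_N3}: replacing ${\bf G}{\bf D}_\gamma{\bf G}^H$ by $\bar{\bf G}\bar{\bf D}_\gamma\bar{\bf G}^H$ simply appends $\sum_{j=1}^{N_c}\sum_{k=1}^{N_u}\gamma_k^j$ alongside $\sum_{k\neq n}\gamma_k$ inside the bracket.

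Finally I would assemble the pieces: every interference-bearing sum $\sum_{k\neq n}\gamma_k$ that appeared in the single-cell $\eta$ is now accompanied by $\sum_{j=1}^{N_c}\sum_{k=1}^{N_u}\gamma_k^j$, while the self-interference/noise terms $2p_u\gamma_n(1-\alpha)$ and the constant $1$ are unaffected; factoring $(\lambda_{\rm p}+e^{-\lambda_{\rm p}})$ out of $\mathbb{E}[\tilde{\bar\Psi}_{\bf G}]$ as in Appendix~\ref{appx:SE_uniform} yields $\bar\eta$ and hence~\eqref{eq:SE_uniform_multi}. I do not expect a real obstacle: the derivation is a routine re-run of Appendix~\ref{appx:SE_uniform}. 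The one point needing a sentence of justification is the statistical model for the out-of-cell beamspace channels ${\bf G}^i$ --- specifically that their nonzero-support positions are uniform over the $N_{\rm RF}$ retained beams and independent of the in-cell supports --- so that the probability in~\eqref{eq:E_1_pp} transfers; once that is granted, the rest is bookkeeping.
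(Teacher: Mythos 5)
Your proposal is correct and follows essentially the same route as the paper: both reduce the problem to re-running Appendix~\ref{appx:SE_uniform} via Lemma~\ref{lem1}, reuse \eqref{eq:E_D}, \eqref{eq:E_N2}, \eqref{eq:E_N1}, and \eqref{eq:E_N3} unchanged, and evaluate the single new quantity $\mathbb{E}[|{\bf g}_n^H{\bf g}_k^i|^2]$ by the identical independence-of-supports argument underlying \eqref{eq:E_1_pp}, obtaining $(\lambda_{\rm p}+e^{-\lambda_{\rm p}})^2/N_{\rm RF}$ and the corresponding extra $\sum_{j,k}\gamma_k^j$ term in the quantization-noise power. The modeling caveat you flag (independence and uniformity of the out-of-cell beamspace supports) is exactly the assumption the paper implicitly invokes, so there is no gap.
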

\normalsize
\begin{proof}
	The expectation of  $\sum_{i=1}^{N_c}\sum_{k=1}^{N_u}\gamma_k^i |{\bf g}_{n}^H {\bf g}_k^i|^2$ can be handled similarly to that of $\sum_{k \neq n}^{N_u}\gamma_k|{\bf g}_{n}^H {\bf g}_k|^2 $, and we have
	\begin{align}
		\label{eq:interferecen_multi}
		\mathbb{E}\Bigg[\sum_{i=1}^{N_c}\sum_{k=1}^{N_u}\gamma_k^i|{\bf g}_{n}^H{\bf g}_{k}^i|\Bigg] 
		\stackrel{(a)} = \sum_{i=1}^{N_c}\sum_{k=1}^{N_u}\gamma_k^i\frac{(\lambda_{\rm p}+e^{-\lambda_{\rm p}})^2}{N_{\rm RF}} 
	\end{align}
	where (a) comes from \eqref{eq:E_N1}. 
	The quantization noise power ${\bf g}_{n}^H {\rm diag}(p_u \bar {\bf G} \bar {\bf D}_\gamma \bar {\bf G}^H + {\bf I}_{N_{\rm RF}}){\bf g}_{n}$ in $\bar \Psi_{\bf G}$ has similar structure as the one without out-of-cell interference and becomes

	\vspace{-1em}
	\footnotesize
	\begin{align}
			\nonumber
		\sum_{i = 1}^{N_{\rm RF}}|g_{i,n}|^2\bigg(p_u\sum_{\substack{k = 1\\ k \neq n}}^{N_u}\gamma_k |g_{i,k}|^2 + p_u \gamma_n |g_{i,n}|^2 + p_u\sum_{j, k}^{N_c, N_u}\gamma_k^j|g_{i,k}^j|^2 + 1\bigg).
	\end{align}
	\normalsize
	Consequently, we can derive the expectation of the quantization noise power by calculating the additional term associated with the out-of-cell interference channels as

	\vspace{-1em}
	\footnotesize
	\begin{align}
		\label{eq:quantization_noise_power_multi}
		\mathbb{E}\Bigg[\sum_{i = 1}^{N_{\rm RF}}|g_{i,n}|^2& \sum_{j=1}^{N_c}\sum_{k=1}^{N_u}\gamma_k^j|g_{i,k}^j|^2\Bigg] 
		\stackrel{(b)} =  \frac{(\lambda_{\rm p} + e^{-\lambda_{\rm p}})^2}{N_{\rm RF}} 	\sum_{j=1}^{N_c}\sum_{k=1}^{N_u}\gamma_k^j
	\end{align}
	\normalsize
	where (b) comes from $\mathbb{E}\big[|g_{i,n}|^2|g_{i,k}^j|^2\big] = \Big(\frac{\lambda_{\rm p} + e^{-\lambda_{\rm p}}}{N_{\rm RF}}\Big)^2$. 
	Combining \eqref{eq:interferecen_multi} and \eqref{eq:quantization_noise_power_multi} with the proof in Appendix \ref{appx:SE_uniform}, the ergodic uplink achievable rate can be approximated as \eqref{eq:SE_uniform_multi}.
\end{proof}
}

\end{appendices}

\bibliographystyle{IEEEtran}
\bibliography{tsp17.bib}

\begin{IEEEbiography}
[{\includegraphics[width=1in,height=1.25in,clip,keepaspectratio]{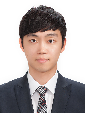}}]
{Jinseok Choi}  (S'14) received his BS (2014) degree in Electrical and Electronic Engineering from Yonsei University, South Korea, and his MS (2016) degree in Electrical and Computer Engineering, University of Texas at Austin, TX, USA. 
He is currently working toward the Ph.D. degree in Electrical and
Computer Engineering, University of Texas at Austin. 
He had worked as a graduate intern in Wireless Access Lab., Huawei, Plano, TX, USA, in 2016 and 2017. 
His research interest is developing and analyzing future wireless communication systems.
\end{IEEEbiography}
\begin{IEEEbiography}
[{\includegraphics[width=1in,height=1.25in,clip,keepaspectratio]{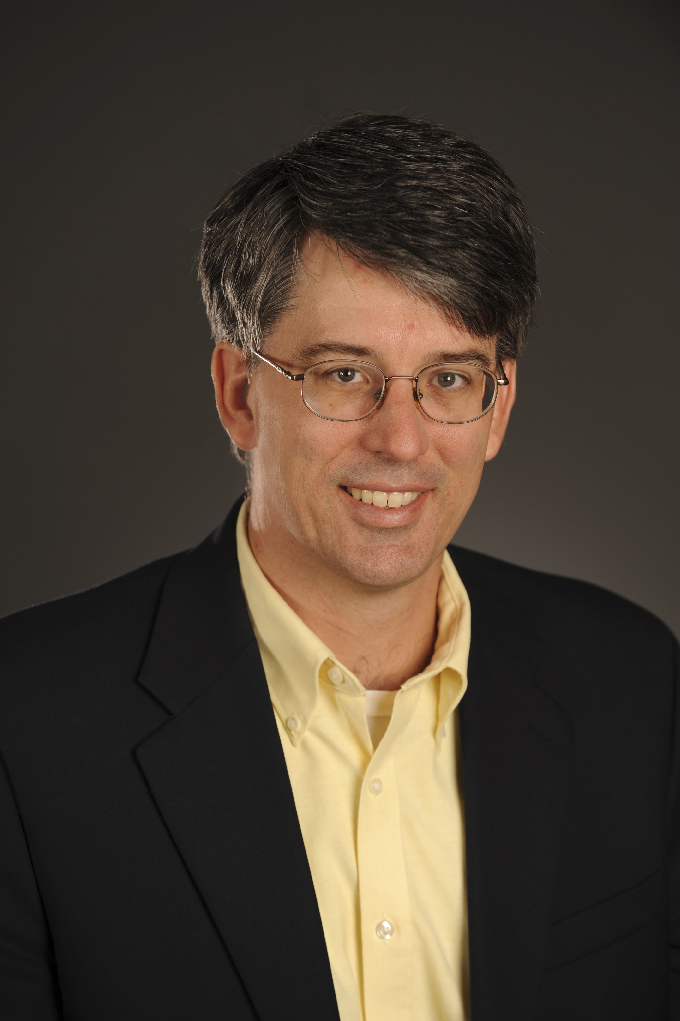}}]
{Brian L. Evans} (M'87-SM'97-F'09) received his BSEECS (1987) degree from the Rose-Hulman Institute of Technology, and his MSEE (1988) and PhDEE (1993) degrees from the Georgia Institute of Technology. From 1993 to 1996, he was a post-doctoral researcher at the University of California, Berkeley. Since 1996, he has been on the faculty at UT Austin.  Prof. Evans has published 250 refereed conference and journal papers, and graduated 27 PhD and 11 MS students. He has received teaching awards, paper awards, and an NSF CAREER Award.
\end{IEEEbiography}
\begin{IEEEbiography}
[{\includegraphics[width=1in,height=1.25in,clip,keepaspectratio]{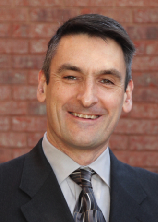}}]
{Alan Gatherer}  received his BEng (1988) degree from Strathclyde University in Scotland, and his MSEE (1989) and PhD (1993) degrees from Stanford University. From 1993 to 2010, he worked for Texas Instruments in Dallas Texas where he became a TI Fellow. In 2010 he Joined Huawei in Plano Texas. Dr. Gatherer has published more than 40 refereed conference and journal papers and has 75 awarded patents. He is the author of one book on the use of DSP in telecommunications and is currently Editor in Chief for the IEEE Communications Society Online Technology Newsfeed. He became an IEEE Fellow in 2015.
\end{IEEEbiography}

\vfill

\end{document}